\newtheorem{theorem}{Theorem}
\newtheorem{lemma}{Lemma}
\newtheorem{definition}[lemma]{Definition}
\newtheorem{corollary}{Corollary}
\newcommand{\unit}{1\!\!1}
\newcommand{\phyM}{\mathcal{S}}
\newcommand{\qs}{\mathcal{Q}}
\newcommand{\ns}{\mathcal{NS}}
\newcommand{\poly}{\mathcal{P}}
\newcommand{\Var}[1]{\langle{#1}\rangle}
\newcommand{\prob}[1]{\operatorname{Pr}\left({#1}\right)}
\begin{document}
	
	\title{More entropy from shorter experiments using polytope approximations to the quantum set}
	
    \author{Hyejung H. Jee*} 
    \affiliation{Quantinuum, Partnership House, Carlisle Place, London SW1P 1BX, UK}
    \email{hailey.jee@quantinuum.com}
    \author{Florian J. Curchod}
    \affiliation{Quantinuum, Terrington House, 13–15 Hills Road, Cambridge CB2 1NL, UK}
    \author{Mafalda L. Almeida**}
    \affiliation{Quantinuum, Terrington House, 13–15 Hills Road, Cambridge CB2 1NL, UK}
    \email{mafalda.almeida@quantinuum.com}

	\maketitle
	\thispagestyle{fancy}
	
	\begin{abstract}

We introduce a systematic method for constructing polytope approximations to the quantum set in a variety of device-independent quantum random number generation (DI-QRNG) protocols. Our approach relies on two general-purpose algorithms that iteratively refine an initial outer-polytope approximation, guided by typical device behaviour and cryptographic intuition. These refinements strike a balance between computational tractability and approximation effectiveness. By integrating these approximations into the probability estimation (PE) framework [Zhang et al., PRA 2018], we obtain significantly improved certified entropy bounds in the finite-size regime. We test our method on various bipartite and tripartite DI-QRNG protocols, using both simulated and experimental data. In all cases, it yields notably higher entropy rates with fewer device uses than the existing techniques. We further extend our analysis to the more demanding task of randomness amplification, demonstrating major performance gains without added complexity. These results offer an effective and ready-to-use method to prove security---with improved certified entropy rates---in the most common practical DI-QRNG protocols. Our algorithms and entropy certification with PE tools are publicly available under a non-commercial license at \url{https://github.com/CQCL/PE_polytope_approximation}.
	\end{abstract}

	\section{Introduction}
	\label{sec:intro}

Device-independent randomness certification is among the key achievements of quantum cryptography. By making security claims that rely minimally on hardware characterisation, this approach enables random number generators that are secure against computationally unbounded adversaries, while minimising the risk of mismatches between the theoretical model and practical implementation.
This is achieved by repeatedly challenging a quantum device and analysing its responses in what is called a multi-prover interactive proof system. The price for this uplift in security is (i) having to build more complex, expensive and/or slower devices and (ii) observing a fall in the rate of entropy generation. As a consequence, the design of DI-QRNG protocols should pay particular attention to maximising the amount of certified randomness per device use. A related requirement is to minimise the number of device uses needed to obtain a positive rate of certified entropy. Indeed, for most applications (imperfect) raw output entropy goes through randomness extraction algorithms which are demanding in computational resources, both in time and memory. In order to avoid this post-processing bottleneck\footnote{For an excellent discussion of this bottleneck refer to, e.g., Appendix E.C of \cite{hayashi2016more} or the concrete examples in Section 5 of \cite{foreman2025cryptomite}.}---that easily becomes a roadblock in practical applications---one needs to minimise the input size of these algorithms and thus minimise the number of device uses, $n$, in the protocol. 

Standard techniques for lower-bounding the output entropy in DI protocols include the method described in \cite{pironio2010random,pironio2013security,fehr2013security} for quantum adversaries classically correlated with the device and the entropy accumulation theorem (EAT) introduced in \cite{Dupuis:2020EAT} for quantum side information. Note that although classical side information is a well-justified assumption in QRNG protocols (see e.g.~\cite{pironio2013security}), assuming quantum side information is necessary for most quantum key distribution (QKD) protocols. While these techniques are proven to be optimal in the asymptotic limit of very large $n$, their performance deteriorates significantly in the finite-size regime compatible with real-world applications. This limitation is addressed by the probability estimation (PE) framework for classical side information~\cite{zhang2018certifying,knill2020generation}, which derives entropy bounds optimised for a given finite $n$.

Despite considerable success in improving the finite-size performance of several DI-QRNG implementations \cite{zhang2018certifying,knill2020generation,shalmPEFinblocks}, a limitation of the PE technique is that it requires an optimisation to be performed over sets of allowed behaviours that must be described by convex polytopes. Allowed behaviours are any device input-output probability distributions compatible with a quantum realisation (through the Born rule) and, in the case of DI-QRNG protocols proposed so far, with the device being composed of separated non-communicating components where measurements take place. Given that quantum theory imposes non-linear constraints in this setup, any practical applications of the PE framework require polytope approximations that contain the (non-polytope) convex set of quantum behaviours. 

Finding good polytope approximations, however, remains a significant challenge. On one hand, coarse outer-approximations to the quantum set tend to greatly overestimate the strategies available to an adversary, leading to unnecessarily weak entropy bounds. On the other hand, achieving fine-grained approximations in the high-dimensional spaces that characterise these sets quickly becomes computationally intractable. So far, only basic approximations in the simplest Bell scenario\footnote{The so-called CHSH scenario, consisting of two non-communicating devices with two measurement choices and outputs each.} have been proposed: in \cite{zhang2018certifying,knill2020generation}, the authors consider the polytope defined by the subset of non-signalling behaviours satisfying Tsirelson's bound for CHSH (i.e., that do not exceed its maximum quantum violation); while in \cite{bierhorst2020tsirelson} the authors go a step further by excluding from this polytope all the behaviours violating a chosen asymmetric CHSH inequality \cite{acin2012randomness} above its quantum bound. 

\subsection{Contributions}\label{sec:contributions}

Our main contribution is to introduce a systematic method for constructing polytope approximations to the quantum set across a variety of DI-QRNG setups. To this end, we propose two general-purpose algorithms that balance computational tractability and approximation effectiveness. The key lies in designing the polytopes based both on the typical behaviour of the device, and on educated guesses of possible adversarial strategies.

More precisely, our approach begins with any initial polytope that contains the quantum set and proceeds iteratively by removing exclusively non-quantum regions of this polytope. As a result, we construct a sequence of polytopes that approximate the quantum set with increasing accuracy. Since the complexity of computing successive approximations grows rapidly, our algorithms carefully select the non-quantum behaviours that significantly contribute to the adversary's guessing power. This is achieved by identifying suitable quantum Bell inequalities according to the typical behaviour of the device and cryptographic intuition about potential adversarial strategies. We identify two effective strategies, each giving rise to a different polytope approximation algorithm.

We illustrate the performance of our approach by integrating it with the PE framework to derive new certified entropy bounds for a variety of practical DI-QRNG protocols. For this purpose, we consider setups based on bipartite and tripartite Bell tests involving two dichotomic observables per party. We analyse both theoretical noisy behaviours that resemble those that could be obtained in experimental implementations, as well as data from real-world loophole-free CHSH-Bell tests. In all cases studied, our method yields significantly higher entropy bounds with fewer device uses compared to previous analyses. 

In addition to standard protocols where the inputs are independent of the devices (e.g., randomness expansion or QKD), we apply our method to a bipartite scenario in which the device's internal state may be correlated with the inputs, typically referred to as randomness amplification protocols. These are particularly challenging to analyse, which often leads to suboptimal entropy rates. Nonetheless, we demonstrate through examples that---with no more effort than required for protocols assuming independent sources---our method achieves improvements of several orders of magnitude over existing alternatives.\\

Our construction can readily be applied to any practical DI-QRNG (based on bi- or tripartite Bell tests with two inputs and outputs per party), providing optimised entropy rates in the finite $n$ regime. To this end, we provide a software implementation (in Python) of both our algorithms for polytope approximations and to obtain entropy bounds using the PE framework. The codes are available at \url{https://github.com/CQCL/PE_polytope_approximation}.

Although we have not tested the effectiveness of the polytope approximations for protocols with more parties, our ideas extend to these more complex scenarios. A similar approach could be considered for semi-DI-QRNGs where the set of allowed behaviours can be bounded using semidefinite constraints (see \cite{Tavakoli_SDP_review} for a survey).

This paper is structured as follows. In section~\ref{sec:PEframework} we provide a concise overview of DI entropy certification protocols and the PE framework, along with a description of how this can be applied in practice. In Section~\ref{sec:polytope}, we describe our algorithms for constructing polytope approximations to the set of quantum correlations. This section is self-contained and may be of independent interest. Section~\ref{sec:DIrandVerif} showcases our technique across a variety of DI-QRNG scenarios. Finally, we present our conclusions in Section~\ref{sec:conclusion}.
	
	\section{The PE framework}
	\label{sec:PEframework}

	The probability estimation framework for randomness certification with classical side information was introduced by Zhang et al.~\cite{zhang2018certifying} and a more thorough description of the framework can be found in the follow-up paper \cite{knill2020generation}. The goal of this section is to provide a summary of the essential steps of the construction along with some simplified proofs.

	\subsection{Protocol basics and notation}
	
Consider a user who can query a physical device that takes inputs represented by a random variable $Z$ and returns outputs $C$. Over an $n$-round protocol, the sequences of inputs and outputs are described by $\mathbf{Z}=Z_1\ldots Z_n$ and $\mathbf{C}=C_1\ldots C_n$, respectively. The user is interested in lower-bounding the uncertainty in $\mathbf{C}$ from the perspective of a computationally unbounded adversary (`Eve'), having access to a complete description of the device and holding classical correlations $E$ with it. The random variables $\mathbf{CZ}E$ of the process follow some  distribution $\mu$. Although the adversary's side information could be represented by the sequence $E_1, \dots, E_n$, here we consider that all this information is merged into a single random variable $E$ available to the adversary before the protocol starts. While in most cases this implies an overestimation of Eve's information, it allows the $n$-round behaviour of the device to be conditioned on a fixed realisation $E=e$, which considerably simplifies the security analysis. 

Inputs $\mathbf{Z}$ may be public, meaning that they can be known to Eve once generated, and follow a distribution $\mu(\mathbf{Z}|e)$ known to Eve. Inputs $Z_i$ are not required to be independent amongst themselves or even from the quantum device, given that they satisfy the Markov condition:
\begin{equation}\label{eq:condindinputs}
	\mu(Z_{i+1}|\mathbf{C}_{\leq i},\mathbf{Z}_{\leq i},e)=\mu(Z_{i+1}|\mathbf{Z}_{\leq i},e)\,,\qquad \forall i\,.
\end{equation}
This is essential to prevent information on past outputs $\mathbf{C}_{\leq i}=C_1\ldots C_i$ from being leaked through inputs $Z_{i+1}$ for any $e$.  

To certify entropy in the demanding DI scenario where the user sees the device as a black-box---i.e., can only rely on its input-output statistics---the setup must incorporate physical constraints that limit the set of allowed behaviours $\mu(\mathbf{C},\mathbf{Z}|e)$. In this context, we consider the standard scenario of a physical device composed of separate, non-communicating sub-units, whose non-communicating nature can be verified without accessing their internal mechanisms. This structure enables the user to identify a set $\phyM$ of allowed behaviours $\mu(C_iZ_i|e)$ at each round $i$.
Take for instance the simplest protocols considered here, based on a bipartite Bell scenario where the physical device is composed of two non-communicating units, $\mathsf{A}$ and $\mathsf{B}$, with inputs $Z=XY$ and outputs $C=AB$. As a result, any allowed distributions $\mu(A_iB_iX_iY_i|e)$ must be non-signalling~\cite{popescu1994quantum}, i.e. $\mu(A_i|X_iY_ie)=\mu(A_i|X_ie)$ and $\mu(B_i|X_iY_ie)=\mu(B_i|Y_ie)$.

Given our focus on practical applications, the set of allowed behaviours is further constrained by quantum theory---specifically, any $\mu(C,Z|e)$ must be compatible with a quantum realisation. This means that conditional behaviours $\mu(C|Z,e)$ must be describable by Born's rule; that is, $\mu(C|Z,e) = \tr(\rho M_C^Z)$, where the quantum state $\rho$ is a positive semi-definite Hermitian operator with unit trace, acting on the Hilbert space $\mathcal{H}$ of the system. The measurement operators $M_C^Z$ may be taken as projectors on $\mathcal{H}$, satisfying $\sum_c M_c^Z = \unit$. In the bipartite example---comprising two non-communicating sub-units---the measurement operators take the form of operators $M_A^X \otimes M_B^Y$ acting on $\mathcal{H}_\mathsf{A}\otimes\mathcal{H}_\mathsf{B}$.

The PE framework allows us to lower-bound the  entropy of $\mathbf{C}=C_1\ldots C_n$, or more generally any sequence $\mathbf{D}=D_1\ldots D_n$ where each $D_i$ is determined by the raw output $C_i$~\footnote{For instance, in the bipartite Bell scenario we could be interested in the output entropy of a single party, say $\mathsf{A}$, implying setting $D=A$.}. In particular, it shows how to design entropy witnesses $W(\mathbf{c},\mathbf{z})$ that are used to check if $\mathbf{D}$ has entropy above a certain threshold (see Figure~\ref{fig:scheme}), from the point of view of an adversary knowing the inputs $\mathbf{z}$ and the side information $e$. In this scheme, both the witness and its threshold are defined prior to the entropy certification stage and should be optimised for the \emph{typical behaviour} of the device. This behaviour can be estimated by querying the device over many rounds during a characterisation step taking place before the entropy certification stage (the latter is depicted in Figure~\ref{fig:scheme}). If desired, the witness can be redesigned in subsequent runs taking into account the observed behaviour over previous runs.  Notice that if the behaviour of the device deviates significantly from the previously estimated typical behaviour, the protocol may abort with greater likelihood, but its security is never compromised.

\begin{figure}[t]
    \centering
    \includegraphics[width=\textwidth]{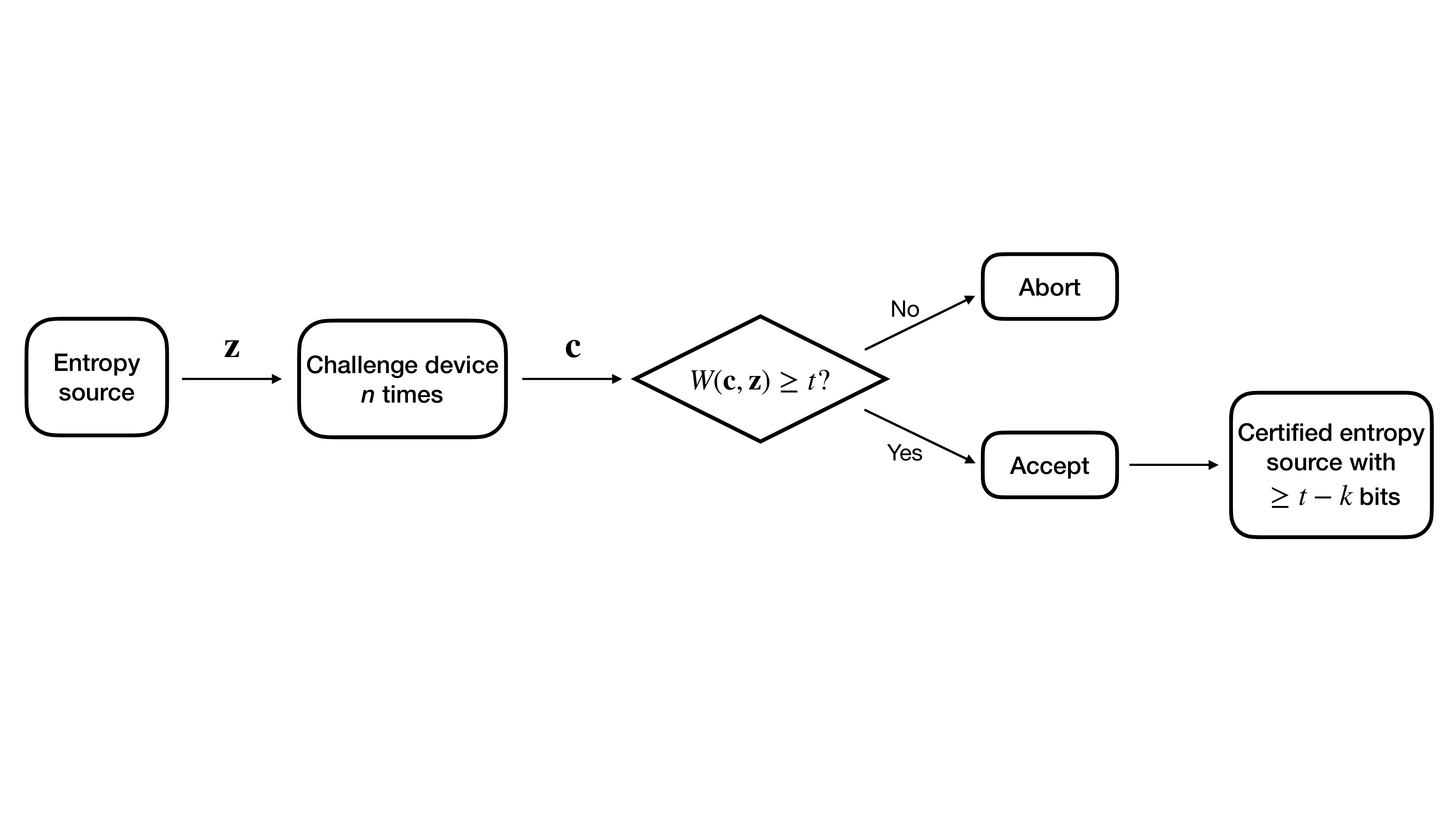}
    \caption{\textbf{Schematic of DI randomness certification protocol}. An entropy source $\mathbf{Z}$ is used to sequentially challenge $n$ times a quantum device, composed of non-communicating sub-units, which for each challenge $z_i$ return an answer $c_i$. The data $\mathbf{(c,z)}$ is evaluated by a pre-defined linear entropy witness $W$. If an entropy larger than a (pre-defined) threshold $t$ is witnessed, the protocol produces a certified entropy source with at least $t-k$ bits. The constant $k$ is typically a small number that depends on the witness design and on a chosen security parameter $\varepsilon$ (see for instance \eqref{eq:extractsmoothH}).}
    \label{fig:scheme}
\end{figure}

\paragraph{Basic entropy measures.}
In this paper, we will use measures of entropy tailored for our specific setup that build on a few basic definitions. 
Throughout, we adopt the convention where entropies are measured in bits; accordingly all logarithms are taken to base 2.

The first entropy measure is the \emph{min-entropy}, which captures the minimum uncertainty of a random variable $X$ distributed according to a probability distribution $\mu$. It is defined as 
\begin{equation}\label{def:Hmin}
	H_{\min} (X)=-\log P_{\rm guess}(X)\,,
\end{equation}
where 
\begin{equation}\label{def:pguess}
	P_{\rm guess}(X):=\max_x \mu(X=x)
\end{equation}
is the \emph{guessing probability} of $X$. When necessary, we will explicitly write $H_{\min ,\mu}(X)$ to indicate the underlying distribution. The \emph{conditional min-entropy} of $X$ given a fixed value $Y=y$ (hereafter denoted simply by $y$) is defined as 
\begin{equation}\label{eq:condHmin}
	H_{\min}(X|y)=-\log \max_x \mu(x|y)\,.
\end{equation}
The second measure we consider is the \emph{Shannon entropy}, which quantifies the average uncertainty of $X$,
\begin{equation}
	H(X)=\mathbb{E}(-\log \mu(X))\,,
\end{equation}
where the expectation operator is given by $\mathbb{E}(Y)=\sum_y\mu(y)y$. When needed, we will explicitly write $\mathbb{E}_\mu$ as a reminder of the distribution being considered.

\subsection{Probability estimation factors (PEFs)}\label{sec:PEFs}
Let us start by introducing the building block of the probability estimation technique\footnote{In~\cite{knill2020generation}, a more general building block, known as the \emph{soft}-PEF, is introduced. This generalization differs from the standard PEF when $\mathbf{C} \neq \mathbf{D}$. However, we do not include this extension here, as it is more cumbersome to work with and unnecessary for our purposes---particularly since every PEF is also a soft-PEF (see Lemma 25 in~\cite{knill2020generation}).}.
\begin{definition}\label{def:PEF}
		Let $\beta>0$, and let $\phyM$ be the set of allowed behaviours. A \emph{probability estimation factor (PEF)} with power $\beta$ for $D|Z$ is a random variable $F=F(C,Z)$ such that 
	\begin{enumerate}
		\item  $F(C,Z)\geq 0$, \label{def:Fpos}
		\item  $\mathbb{E}\left(F(C,Z)\mu(D|Z)^\beta\right)\leq1\,,\qquad \forall \mu(C,Z) \in \mathcal{S}\,. $\label{def:PEFcons}
	\end{enumerate} 
	\end{definition}
Notice that $F$ can be evaluated using only the information available to the user. 

\begin{corollary}\label{th:PEFlargerbeta}
    If $F$ is a PEF with power $\beta$, then it is necessarily a PEF for any power $\beta'\geq \beta$. 
\end{corollary}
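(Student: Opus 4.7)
The plan is to exploit the fact that $\mu(D|Z)$ is a probability, hence bounded in $[0,1]$, so raising it to a larger power can only make it smaller. Combined with the non-negativity of $F$, this lets us dominate the integrand for $\beta'$ by the integrand for $\beta$ pointwise, after which the required expectation bound transfers directly.

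Concretely, I would first note that property~\ref{def:Fpos} in Definition~\ref{def:PEF} is independent of the power, so $F(C,Z)\geq 0$ is automatically satisfied for any $\beta'$. It remains to verify property~\ref{def:PEFcons} for the new exponent. For any realisation $(c,z)$, the value $\mu(D=c'|z)$ (for whatever $c'$ is the relevant determined outcome, using that $D_i$ is a function of $C_i$) lies in $[0,1]$, so for $\beta'\geq\beta$ I have $\mu(D|Z)^{\beta'}\leq \mu(D|Z)^{\beta}$ pointwise, with the standard convention $0^0:=1$ handling any $\beta=0$ edge case if needed (though the definition requires $\beta>0$, which rules that out).

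Multiplying this pointwise inequality by $F(C,Z)\geq 0$ preserves the direction, giving $F(C,Z)\mu(D|Z)^{\beta'}\leq F(C,Z)\mu(D|Z)^{\beta}$ for every realisation. Taking expectations under any $\mu(C,Z)\in\phyM$ and using monotonicity of the expectation then yields
\begin{equation*}
\mathbb{E}\bigl(F(C,Z)\mu(D|Z)^{\beta'}\bigr)\leq \mathbb{E}\bigl(F(C,Z)\mu(D|Z)^{\beta}\bigr)\leq 1\,,
\end{equation*}
where the last inequality is exactly the PEF condition at power $\beta$ that we have assumed. Hence $F$ satisfies both defining conditions of a PEF with power $\beta'$.

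There is essentially no technical obstacle here; the only subtlety worth flagging is that the inequality $\mu(D|Z)^{\beta'}\leq \mu(D|Z)^{\beta}$ genuinely relies on $\mu(D|Z)\in[0,1]$, which is guaranteed because $D$ is determined by $C$ and the underlying $\mu$ is a probability distribution. If the definition of the PEF were formulated with respect to any non-negative function rather than a probability, the corollary would fail, so it is worth being explicit about this point in the write-up.
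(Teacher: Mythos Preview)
Your proof is correct and is exactly the natural argument: the pointwise bound $\mu(D|Z)^{\beta'}\leq \mu(D|Z)^{\beta}$ for $\beta'\geq\beta$ follows from $\mu(D|Z)\in[0,1]$, and non-negativity of $F$ then lets you pass to expectations. The paper itself states this corollary without proof, treating it as an immediate consequence of the definition, so there is nothing to compare against beyond noting that your write-up fills in precisely the intended one-line justification.
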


A key property of PEFs is that a product of sequential PEFs defines a new PEF, as we will see in the following theorem. This result requires the side information $e$ to be fixed throughout the process, so, for clarity, we explicitly include this dependency on $e$:
	
\begin{theorem}[Theorem 23 in \cite{knill2020generation}]\label{th:chainedPEF}
	Let $\phyM^n$ be the set of allowed behaviours in a sequential $n$-round protocol. Define the product of single round PEFs with power $\beta$ as $T_n=\Pi_{i=1}^nF_i$. If the Markov condition \eqref{eq:condindinputs} holds, $T_n$ is a PEF with power $\beta$ for $\mathbf{D}|\mathbf{Z}$, i.e.  
	\begin{enumerate}
		\item $T_n(\mathbf{C},\mathbf{Z})\geq 0$, \label{eq:Tnpos}
		\item $\mathbb{E}\left(T_n(\mathbf{C},\mathbf{Z})\mu(\mathbf{D}|\mathbf{Z},e)^\beta\right)\leq1\,,\qquad \forall \mu(\mathbf{C,\mathbf{Z}}|e) \in \phyM^n$.\label{eq:Tnconst}
	\end{enumerate} 
\end{theorem}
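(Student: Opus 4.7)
The plan is to proceed by induction on $n$. Non-negativity \eqref{eq:Tnpos} is immediate since $T_n$ is a product of non-negative random variables, so I focus on \eqref{eq:Tnconst}. The base case $n=1$ reduces to the definition of a single-round PEF (Definition~\ref{def:PEF}) applied to $F_1$ and the round-$1$ behaviour, which lies in $\phyM$ by assumption.

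For the induction step, assume the claim for $n-1$. First, I would use the chain rule together with the Markov condition~\eqref{eq:condindinputs}---which implies that past outputs are conditionally independent of future inputs given past inputs and $e$---to factorise
\begin{equation*}
\mu(\mathbf{D}|\mathbf{Z},e)^\beta \,=\, \prod_{i=1}^n \mu(D_i|\mathbf{D}_{<i},\mathbf{Z}_{\leq i},e)^\beta .
\end{equation*}
This re-conditioning is the crucial use of the Markov assumption: without it, later inputs $Z_{j}$ with $j>i$ would remain in the conditioning of $\mu(D_i|\cdot)$ and the per-round inner expectations below would no longer align with the single-round PEF structure on $\phyM$.

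The next step is the tower property applied with respect to the history $\mathcal{F}_{n-1}=(\mathbf{C}_{<n},\mathbf{Z}_{<n},e)$. Since $T_{n-1}$ and $\mu(\mathbf{D}_{<n}|\mathbf{Z}_{<n},e)^\beta$ are $\mathcal{F}_{n-1}$-measurable, they can be pulled outside, leaving
\begin{equation*}
\mathbb{E}\!\left[T_n\,\mu(\mathbf{D}|\mathbf{Z},e)^\beta\right] \,=\, \mathbb{E}\!\left[\,T_{n-1}\,\mu(\mathbf{D}_{<n}|\mathbf{Z}_{<n},e)^\beta \; \mathbb{E}\!\left[F_n\,\mu(D_n|\mathbf{D}_{<n},\mathbf{Z}_{\leq n},e)^\beta \,\big|\, \mathcal{F}_{n-1}\right]\,\right] .
\end{equation*}
Given $\mathcal{F}_{n-1}$, the round-$n$ behaviour $\mu(C_n,Z_n|\mathcal{F}_{n-1})$ lies in $\phyM$ by definition of $\phyM^n$. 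I would then apply Definition~\ref{def:PEF} to $F_n$ on this conditional distribution to bound the inner conditional expectation by $1$, and close the induction by invoking the inductive hypothesis on $T_{n-1}$ acting on the $(n-1)$-round behaviour $\mu(\mathbf{C}_{<n},\mathbf{Z}_{<n}|e)$.

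The main obstacle I foresee lies in the last step: the conditional appearing inside the inner expectation is $\mu(D_n|\mathbf{D}_{<n},\mathbf{Z}_{\leq n},e)$, while Definition~\ref{def:PEF} is stated in terms of $\mu(D_n|Z_n)$ computed from the round-$n$ behaviour. These coincide when $\mathbf{C}=\mathbf{D}$, but in general they differ because $\mathbf{D}_{<n}$ is a coarse-graining of $\mathbf{C}_{<n}$. I would resolve this by observing that $\mu(\,\cdot\,|\mathbf{D}_{<n},\mathbf{Z}_{<n},e)$ is a convex combination, over histories $\mathbf{C}_{<n}$ consistent with $\mathbf{D}_{<n}$, of the round-$n$ conditionals $\mu(\,\cdot\,|\mathbf{C}_{<n},\mathbf{Z}_{<n},e)\in\phyM$; by convexity of $\phyM$, this mixed behaviour also lies in $\phyM$, and applying the PEF condition to it produces exactly the bound required. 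This is, in essence, the soft-PEF route of~\cite{knill2020generation} alluded to in the footnote above.
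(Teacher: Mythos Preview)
The paper does not supply its own proof of this theorem; it is quoted as Theorem~23 of \cite{knill2020generation} and used without argument. There is therefore nothing in the paper to compare your sketch against directly.

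Your outline follows the standard route---induction on $n$, chain-rule factorisation of $\mu(\mathbf{D}|\mathbf{Z},e)$ using the Markov condition, then the tower property to peel off round $n$---and this is indeed how the result is established in the cited reference. You also correctly isolate the one real subtlety, the $\mathbf{C}\neq\mathbf{D}$ case. One caveat on your proposed resolution: the convexity observation yields the PEF bound for the inner expectation conditioned on $(\mathbf{D}_{<n},\mathbf{Z}_{<n},e)$, whereas your tower decomposition conditions on $\mathcal{F}_{n-1}=(\mathbf{C}_{<n},\mathbf{Z}_{<n},e)$ precisely so that $T_{n-1}$ can be pulled out; these two conditionings do not match, so the claim that ``applying the PEF condition to it produces exactly the bound required'' is not quite right as written. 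The actual fix is the soft-PEF notion you already point to (cf.\ the footnote citing Lemma~25 of \cite{knill2020generation}): the soft-PEF inequality is formulated with a reference conditional that need not coincide with the model's own $\mu(D|Z)$, which is exactly what allows the $\mathbf{D}_{<n}$-conditioned probability to sit inside an expectation taken under the $\mathbf{C}_{<n}$-conditioned law. For $\mathbf{C}=\mathbf{D}$ the mismatch disappears and your argument is complete as stated.
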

\noindent Note that according to Corollary~\ref{th:PEFlargerbeta}, we can compose sequences of individual PEFs with different powers $\beta_i$ by choosing $\beta=\max_i \beta_i$. 

Using Theorem~\ref{th:chainedPEF} and Markov's inequality, we can now obtain a high-confidence lower-bound on the conditional probability of $\mathbf{D}$:
	\begin{equation}
		\prob{T_n(\mathbf{C},\mathbf{Z}) \mu(\mathbf{D}|\mathbf{Z},e)^\beta\geq \frac{1}{\kappa}}\leq \kappa  \mathbb{E}\left(T_n(\mathbf{C},\mathbf{Z}) \mu(\mathbf{D}|\mathbf{Z},e)^\beta\right)\leq \kappa\,,
	\end{equation}
where $ \kappa > 0$ is a chosen security parameter. This implies a confidence bound on any $\mu(\mathbf{D}|\mathbf{Z},e)$ that belongs to $\phyM^n$:
	\begin{equation}\label{eq:confboundp}
		\prob{  -\log\mu (\mathbf{D}|\mathbf{Z},e)\geq \frac{\log T_n(\mathbf{C},\mathbf{Z})}{\beta}-\frac{\log{1/\kappa}}{\beta}}\geq 1-\kappa\,,\quad \forall e.
	\end{equation}
Given that $\log T_n(\mathbf{C,\mathbf{Z}})=\sum_{i=1}^{n}\log F_i(C_i,Z_i)$, this multi-round bound reduces to a function of the single round PEFs and the confidence level $\kappa$.

\subsection{From PEFs to extractable entropy}\label{sec:ExtractEntropy}
	
Here we will see how the confidence bound \eqref{eq:confboundp} allows us to establish a lower bound on the randomness of $\mathbf{D}$. To identify the relevant measure of entropy in our context, recall that the goal of a QRNG protocol is to output a string of near-perfect random bits independent from any other variables. This can be achieved by processing $\mathbf{D}$ through a randomness extraction algorithm \cite{boundedStorageKonigTerhal}. The length of the near-perfect random bit string that is generated as a result will depend on the chosen algorithm and on the amount of \emph{extractable entropy} in $\mathbf{D}$. The latter can be quantified by smooth min-entropies, which were proposed by \cite{RenWol04a} for the general case of adversaries holding quantum side information. Since we restrict our attention to adversaries classically correlated to the device, we will now introduce a lower bound on the (classical) 
smooth conditional min-entropy and from there, find a bound on the extractable entropy of $\mathbf{D}$.

\begin{definition}\label{def:smoothHmin}
		Let random variables $\mathbf{DZ}E$ be jointly distributed with $\mu$.  $\mathbf{D}$ has \emph{$\epsilon$-smooth $\mathbf{Z}$E-conditional min-entropy} $H_{\min}^{\epsilon}(\mathbf{D}|\mathbf{Z},E)\geq h$ if there exists a distribution $\nu$, defined on the same domain as $\mu$, such that
		\begin{enumerate}
			\item $\|\mu-\nu\|\leq \epsilon$\,,
			\item $H_{\min,\nu}(\mathbf{D}|z,e)\geq h\,, \qquad \forall z, e$\,,
			
		\end{enumerate}
		where $\|\mu(X)-\nu(X)\|=\frac{1}{2}\sum_x|\mu(x)-\nu(x)|$ is the \emph{total variation distance}.
	\end{definition}

	The next theorem establishes a relationship between the confidence bound \eqref{eq:confboundp} and a lower bound on the smooth conditional min-entropy. 
	
	\begin{theorem}\label{th:confbound2minent}
		If, for all $e$, $\prob{  -\log \mu(\mathbf{D}|\mathbf{Z},e)\geq h}\geq 1-\epsilon$ then $H_{\min}^{\epsilon}(\mathbf{D}|\mathbf{Z},E)\geq h$.
	\end{theorem}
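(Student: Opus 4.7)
The plan is to construct an explicit distribution $\nu$, close in total variation to $\mu$, such that the maximum atom of every conditional $\nu(\mathbf{D}|z,e)$ is at most $2^{-h}$. The intuition is to truncate the atoms of $\mu$ lying above the threshold $2^{-h}$ and redistribute their excess mass onto the remaining outcomes, leaving the marginal over $(\mathbf{Z},E)$ unchanged.

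First I would reformulate the hypothesis in terms of the ``bad'' set
\begin{equation*}
B_e = \{(\mathbf{d},\mathbf{z}) : \mu(\mathbf{d}|\mathbf{z},e) > 2^{-h}\}\,,
\end{equation*}
which, by hypothesis, satisfies $\prob{B_e \mid e}\leq \epsilon$ for every $e$. I would then set $\nu(\mathbf{z},e) := \mu(\mathbf{z},e)$ and, for each $(\mathbf{z},e)$ with $\mu(\mathbf{z},e) > 0$, define the conditional as the pointwise cap $\min(\mu(\mathbf{d}|\mathbf{z},e),2^{-h})$ plus a nonnegative ``filler'' $r(\mathbf{d}|\mathbf{z},e)$ supported on outcomes still strictly below the cap and chosen to restore normalisation. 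This step is feasible whenever $|\Omega_{\mathbf{D}}| \geq 2^{h}$, which is a necessary condition for the conclusion to be meaningful.

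By construction $\max_{\mathbf{d}} \nu(\mathbf{d}|\mathbf{z},e) \leq 2^{-h}$, so $H_{\min,\nu}(\mathbf{D}|z,e) \geq h$ uniformly in $z,e$, giving clause~2 of Definition~\ref{def:smoothHmin}. For clause~1 the key step is the pointwise total-variation estimate
\begin{equation*}
\|\mu(\cdot|\mathbf{z},e) - \nu(\cdot|\mathbf{z},e)\| \leq \prob{\mu(\mathbf{D}|\mathbf{z},e) > 2^{-h} \mid \mathbf{z},e}\,,
\end{equation*}
which follows by summing $\mu - \nu$ on $B_e$ (where the cap removes exactly the excess above $2^{-h}$) and $\nu - \mu$ on its complement (where the filler is added), and noting that the two contributions coincide since both distributions are normalised. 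Averaging with weights $\mu(\mathbf{z},e)$ and applying the hypothesis yields $\|\mu - \nu\| \leq \sum_e \mu(e)\,\epsilon = \epsilon$.

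The main obstacle is mostly bookkeeping: verifying that the filler step is well-defined and that $\nu$ is a bona-fide probability distribution. This is an elementary counting argument relying on $|\Omega_{\mathbf{D}}|\geq 2^{h}$; the remainder of the proof reduces to standard manipulations of marginals and conditionals.
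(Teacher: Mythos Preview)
Your proof is correct and the overall strategy matches the paper's: construct $\nu$ with the same $(\mathbf{Z},E)$-marginal and conditionals bounded by $2^{-h}$. The constructions differ. The paper declares a whole pair $(\mathbf{z},e)$ ``bad'' whenever $\max_{\mathbf{d}}\mu(\mathbf{d}|\mathbf{z},e)>2^{-h}$, replaces the entire conditional there by the uniform distribution $1/|\Omega_{\mathbf{D}}|$, and bounds the total variation via an auxiliary sub-probability lemma (Lemma~\ref{th:subprobTV}). You instead cap individual atoms and redistribute the excess, bounding TV directly without that lemma. Your atom-level construction has the advantage of working verbatim from the main-text hypothesis $\Pr\bigl(\mu(\mathbf{D}|\mathbf{Z},e)\leq 2^{-h}\bigr)\geq 1-\epsilon$, which is exactly the form produced by Markov's inequality in \eqref{eq:confboundp}; the paper's appendix proof is actually phrased for the event $\{\max_{\mathbf{d}}\mu(\mathbf{d}|\mathbf{Z},e)\leq 2^{-h}\}$, a condition on $(\mathbf{Z},e)$ alone. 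Conversely, the paper's wholesale replacement by the uniform avoids your filler bookkeeping. Both approaches need the same ambient condition $|\Omega_{\mathbf{D}}|\geq 2^{h}$---you for the filler step to be feasible, the paper for $1/|\Omega_{\mathbf{D}}|\leq 2^{-h}$.
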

	\begin{proof}
		See Appendix~\ref{app:boundsHminproofs}.
	\end{proof}
	
	To identify the extractable entropy in our protocols, recall that they include an acceptance step, where the sequence of $n$ inputs and outputs is evaluated by some entropy estimator $W$. In our case, it is defined as
	\begin{equation}\label{eq:entropy_witness}
		W:=\frac{1}{\beta}\sum_i \log{F_i}\,.
	\end{equation}
Having chosen a set of PEFs $\{F_i\}_i$ with power $\beta$ and an acceptance threshold $t$, the output sequence $\mathbf{d}$ is accepted and proceeds to extraction if 
$W(\mathbf{c},\mathbf{z})\geq t$; otherwise, the protocol aborts (Figure~\ref{fig:scheme}). The following result finally establishes a lower bound on the extractable entropy of \textbf{D}.

	\begin{theorem}
		\label{thm:extractable_E}
		Consider the event $\verb|Accept|=\{W\geq t\}$ and its probability $p_{Acc}=\Pr(W\geq t)$. If, for all $e$, $\prob{-\log\mu(\mathbf{D}|\mathbf{Z},e)\geq W}\geq 1-\kappa$ and $p_{Acc}=\Pr(W\geq t),$ then the \emph{extractable entropy} of $\mathbf{D}$, defined by $H_{\min}^{\varepsilon/p_{Acc}}(\mathbf{D}|\mathbf{Z},E,\verb|Accept|)$, is lower-bounded by
		\begin{equation}\label{eq:extractsmoothH}
			H_{\min}^{\varepsilon/p_{Acc}}(\mathbf{D}|\mathbf{Z},E,\verb|Accept|)\geq t + \frac{\log \kappa}{\beta}+ \log (\varepsilon-\kappa) \,,
		\end{equation}
		where $\varepsilon>\kappa$ is the chosen security parameter.
	\end{theorem}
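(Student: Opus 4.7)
The plan is to combine the tail hypothesis with the acceptance event via a union bound, then upgrade the resulting joint tail bound into a conditional smooth min-entropy bound along the lines of Theorem~\ref{th:confbound2minent}.

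First, let $G=\{-\log\mu(\mathbf{D}|\mathbf{Z},e)\geq W\}$ and $A=\texttt{Accept}=\{W\geq t\}$. The hypothesis gives $\Pr(G|e)\geq 1-\epsilon$ for every $e$, so $\Pr(G\cap A|e)\geq \Pr(A|e)-\epsilon$ and therefore $\Pr(G\cap A)\geq p_{Acc}-\epsilon$. On $G\cap A$ the two bounds compose into $-\log\mu(\mathbf{D}|\mathbf{Z},e)\geq t$, i.e.\ a uniform pointwise cap $\mu(\mathbf{D}|\mathbf{Z},e)\leq 2^{-t}$ on the good event. In the PEF instantiation the hypothesis actually comes from \eqref{eq:confboundp}, which carries a shift of $\log(1/\epsilon)/\beta$ below $W$; propagating that shift through the same argument is what produces the $\log\epsilon/\beta$ term in the final bound.

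Second, I would convert this joint tail bound into a smooth min-entropy bound for $\mu$ conditioned on $\texttt{Accept}$. Concretely I would construct a distribution $\nu$ on the support of $\mu(\cdot|\texttt{Accept})$ satisfying $\|\mu(\cdot|\texttt{Accept})-\nu\|\leq \varepsilon/p_{Acc}$ together with $\nu(\mathbf{d}|z,e)\leq 2^{-h}$ for every $(\mathbf{d},z,e)$, where $h=t+\log\epsilon/\beta+\log(\varepsilon-\epsilon)$. The natural construction lets $\nu$ coincide with $\mu(\cdot|\texttt{Accept})$ on $G\cap A$---the missing mass, at most $\epsilon/p_{Acc}$, absorbs the first part of the smoothing budget---and then uses the remaining $(\varepsilon-\epsilon)/p_{Acc}$ slack to flatten any residual high-probability mass through a cap-and-redistribute step at level $2^{-h}$, in the same spirit as the standard proof of Theorem~\ref{th:confbound2minent}. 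The additional $\log(\varepsilon-\epsilon)$ appearing in the entropy bound is precisely the price paid by this second smoothing step.

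The main obstacle is this second step, i.e.\ producing $\nu$ that meets both constraints simultaneously. The subtlety is that $\mu(\mathbf{d}|z,e)\leq 2^{-t}$ holds on $G\cap A$ for the \emph{unconditional} distribution, whereas the quantity of interest is the \emph{accept-conditioned} distribution $\mu(\cdot|\texttt{Accept})$, whose renormalisation factor $\mu(\texttt{Accept}|z,e)$ can in principle be arbitrarily small for particular $(z,e)$. Carefully bookkeeping the smoothing between these two distributions---exactly how the $\varepsilon-\epsilon$ slack compensates for that renormalisation loss while preserving the pointwise cap---is the quantitative heart of the argument.
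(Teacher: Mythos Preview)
Your high-level structure---intersect the tail event with \texttt{Accept}, then smooth---is sound, and you have correctly located the only real difficulty: the pointwise cap $\mu(\mathbf{d}|z,e)\leq 2^{-t'}$ (with $t'=t+\log\epsilon/\beta$) is for the \emph{unconditional} distribution, while after conditioning on \texttt{Accept} the cap becomes $2^{-t'}/\mu(\texttt{Accept}|z,e)$, which is not uniformly controlled.

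The paper's proof resolves this by routing through the \emph{average} smooth min-entropy (Definition~\ref{def:smoothAvgHmin}) rather than aiming directly for the worst-case one. The point is that
\[
\mu(z,e\,|\,\texttt{Accept})\;=\;\frac{\mu(z,e)\,\mu(\texttt{Accept}|z,e)}{p_{Acc}},
\]
so when one averages the accept-conditioned guessing probability over $(z,e)$, the troublesome factor $\mu(\texttt{Accept}|z,e)$ \emph{cancels}, yielding cleanly
\[
H_{\min}^{\epsilon/p_{Acc},\text{avg}}(\mathbf{D}|\mathbf{Z},E,\texttt{Accept})\;\geq\; t+\frac{\log\epsilon}{\beta}+\log p_{Acc}
\]
(this is Lemma~18 of \cite{knill2020generation}). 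The conversion from average to worst-case is then a single application of Lemma~\ref{th:avg2worstsmoothH} (Markov) with $\delta=(\varepsilon-\epsilon)/p_{Acc}$, which both produces the $\log(\varepsilon-\epsilon)$ loss and brings the smoothing from $\epsilon/p_{Acc}$ up to $\varepsilon/p_{Acc}$.

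Your direct route can be completed, but it needs precisely this observation: the accept-conditioned mass of the ``bad'' pairs $(z,e)$ with $\mu(\texttt{Accept}|z,e)<\varepsilon-\epsilon$ is at most $(\varepsilon-\epsilon)/p_{Acc}$, by the same identity above; smoothing those away leaves only $(z,e)$ on which the cap $2^{-t'}/(\varepsilon-\epsilon)=2^{-h}$ holds. So your ``cap-and-redistribute'' step is really a Markov argument on $\mu(\texttt{Accept}|z,e)$, and once you say that explicitly the two proofs are the same argument differently packaged. What the paper buys by naming the intermediate average quantity is that the cancellation is automatic and the two smoothings are cleanly separated; what your organisation buys is avoiding a new definition, at the cost of having to spot the cancellation by hand.
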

	
	\begin{proof}
		See Appendix~\ref{app:boundsHminproofs}. 
	\end{proof}
	
	Note that the threshold $t$ can be chosen according to  
	\begin{equation}\label{eq:accept_threshold}
		t=\mathbb{E}_{\mu_{\rm typ}}(W)-\delta\,,
	\end{equation}
    where the expected value of $W$ is evaluated on the typical behaviour of the device $\mu_{\rm typ}$ (see discussion in Section~\ref{sec:PEframework}), and $\delta$ is a parameter that allows us to tune the acceptance rate of the protocol. The extractable entropy \eqref{eq:extractsmoothH} then takes the form 
	\begin{equation}\label{eq:extEnt}
		H_{\min}^{\varepsilon/p_{Acc}}(\mathbf{D}|\mathbf{Z},E,\verb|Accept|)\geq \sum_i \frac{\mathbb{E}_{\mu_{\rm typ}} [\log F_i]}{\beta} -\frac{1}{\beta}\log \frac{1}{\kappa}-\log \frac{1}{\varepsilon-\kappa}-\delta \,,
	\end{equation} 
	where $\sum_i \mathbb{E}_{\mu_{\rm typ}}\left(\frac{\log F_i}{\beta}\right)$ represents the asymptotic entropy for very large $n$, while the remaining terms do not depend on $n$. Although practical protocols typically use $\delta>0$, hereafter we set $\delta=0$ for simplicity.
The PE approach then offers a trade-off between the asymptotic entropy term and the finite size/non-IID (independent and identically-distributed) penalty term $(\log 1/\kappa)/\beta$, that is known to be advantageous for small $n$~\cite{zhang2018certifying,knill2020generation}. With other standard approaches, such as  \cite{pironio2013security}  for classical and  \cite{Dupuis:2020EAT} for  quantum side information, the extractable entropy contains a penalty term that scales with $\sqrt{n}$ (see Appendix~\ref{app:extractEnt_other}), which becomes negligible for very large $n$ but is significant for lower numbers of rounds.

	\subsection{Finding good PEFs}\label{sec:findPEF}
	
At this point, we aim at designing PEFs that maximise the amount of extractable entropy \eqref{eq:extEnt} in our protocols for a typical behaviour $\mu_{\rm typ}$. This means finding the set of $\{F_i\}_i$ with power $\beta$ that maximises
\begin{equation}\label{eq:optPEF}
\max_{\beta, \{F_i\}_i,\kappa} \mathbb{E}_{\mu_{\rm typ}}(W)-\frac{1}{\beta}\left(\log \frac{1}{\kappa} \right)-\log\frac{1}{\varepsilon-\kappa}\,,
\end{equation}
where $\mathbb{E}_{\mu_{\rm typ}}(W)=\sum_i \mathbb{E}_{\mu_{\rm typ}}(\log F_i)/\beta$ according to \eqref{eq:entropy_witness}, $\varepsilon$ is the chosen security parameter, and $\kappa$ is a free optimisation parameter such that $\varepsilon>\kappa>0$. Note that the user has no means of finding a meaningful estimate of $\mu_{\rm typ}(C_i,Z_i|e)$ at every single round.
A reasonable approach is to optimise $W$ for data collected over many rounds and perform an estimate of $\mu_{\rm typ}$ assuming an IID behaviour of the device,  i.e. where inputs (outputs) are described by the same random variable $Z_i=Z$ ($C_i=C$) at every round. This entropy estimator will be suboptimal for the real, non-IID, experimental conditions but is, by definition, a valid estimator for any $\mu$ in the allowed set of behaviours.

Performing an optimisation for the IID behaviour greatly simplifies our problem. The probability distributions at every round can now be represented by a single distribution
    \begin{equation}
        p(C,Z)=\mu_{\rm typ}(C_i,Z_i|e)\,,
    \end{equation}
while the set of PEFs is reduced to a single function $F=F_i$. Consequently, the entropy witness \eqref{eq:entropy_witness} now becomes 
\begin{equation}\label{eq:W_IID}
    W=\frac{n}{\beta}\log F\,.
\end{equation}
As a result, optimisation \eqref{eq:optPEF} is performed over a much simpler set, and the expected value in the objective function simplifies to
	\begin{equation}\label{eq:entropy_estIID}
		\mathbb{E}_{\mu_{\rm typ}}(W)=n\frac{\mathbb{E}_p(\log F)}{\beta}\,.
	\end{equation} 
Distribution $p$ can be estimated by the frequencies of data  collected either in a characterisation stage of the protocol or in previous certification stages (see Section~\ref{sec:intro}), according to
\begin{equation}\label{eq:freq}
		\hat p(c,z)=\frac{1}{n}\sum_{i=1}^n\chi(c_i=c,z_i=z)\,,
	\end{equation}
	where $\chi$  is an indicator function such that $\chi(\verb|True|)=1$ and $\chi(\verb|False|)=0$. 
    
Testing the entropy of data $(\mathbf{c},\mathbf{z})$ collected during a randomness certification stage  (as depicted in Figure~\ref{fig:scheme}) corresponds to computing 
\begin{equation}W(\mathbf{c},\mathbf{z})=n\sum_{c,z} p_{\rm obs}(c,z)\frac{\log F(c,z)}{\beta}\,,
\end{equation}
where $p_{\rm obs}(c,z)$ are the frequencies in $(\mathbf{c},\mathbf{z})$ also using \eqref{eq:freq}.

	Finally, we can express the bound \eqref{eq:extEnt} on the extractable entropy in the more recognisable form, 
		\begin{equation}\label{eq:extEntIID}
		H_{\min}^{\varepsilon/p_{Acc}}(\mathbf{D}|\mathbf{Z},E,\verb|Accept|)\geq nt' -\frac{1}{\beta}\left(\log \frac{1}{\kappa} \right)-\log\frac{1}{\varepsilon-\kappa}\ \,,
	\end{equation} 
    where the asymptotically dominating term is the amount of entropy defined by the acceptance threshold $t$---which can also be interpreted as the product of
    the number of rounds $n$ by the single round entropy $t'=\mathbb{E}_p\left(\frac{\log F}{\beta}\right)$---and the penalty terms are due to memory and finite size statistics. The single round entropy estimator $(\log F)/\beta$ is a suboptimal estimator for the average Shannon entropy (refer to \cite{knill2020generation} and also Appendix \ref{app:PEFmintradeoff}). Moreover, in the asymptotic limit of very large $n$, it is proven that this estimator converges to optimality~\cite{knill2020generation}.

	\paragraph{Tackling the optimisation problem.} In order to find the optimal $W$ for an IID behaviour, we now have to solve the optimisation problem described in \eqref{eq:optPEF}:
	\begin{equation}\label{eq:fullPEFopt}
		\begin{aligned}
			\max_{F,\beta,\kappa} \quad & \frac{1}{\beta}\left(n\sum_{c,z} p(c,z)\log F(c,z)+\log (\kappa) \right)+\log(\varepsilon-\kappa)\\
			\textrm{s.t.} \quad & F\geq 0\\
			&\beta>0\\
		&\varepsilon>\kappa>0\\	&\sum_{c,z}\mu(c,z|e)F(c,z)\mu(d|z,e)^\beta\leq 1\,,\quad \forall \mu(C,Z|e) \in \phyM\,.\\
		\end{aligned}
	\end{equation}

\noindent 
Due to the non-linear objective and constraint, directly solving the full optimisation is intractable, and we therefore consider a relaxation of the above problem.
Specifically, for fixed values of $(\beta, \kappa)$, we optimise over $F$ by solving:
	\begin{equation}
		\begin{aligned} \label{eq:PEFoptnobeta}
			\max_F\quad & \sum_{c,z} p(c,z)\log F(c,z)\\
			\textrm{s.t.} \quad & F \geq 0\,,\\	&\sum_{c,z}\mu(c,z|e)F(c,z)\mu(d|z,e)^\beta\leq 1\,,\quad \forall \mu(C,Z|e) \in \phyM\,.\\
		\end{aligned}
	\end{equation} 
We perform optimisation for a large number of $(\beta, \kappa)$ pairs to generate a collection of feasible solutions to the original problem \eqref{eq:fullPEFopt}. Each such solution provides a lower bound on the true optimum. Finally, we select the tuple $(F, \beta, \kappa)$ from this set that maximises the objective in \eqref{eq:fullPEFopt}. Although this strategy brings us closer to finding a good entropy estimator \eqref{eq:entropy_estIID} for the typical device behaviour, whether we can actually have a practical method to tackle the optimisation \eqref{eq:PEFoptnobeta} depends on the characterisation of $\phyM$.  

In this work, we consider two distinct device-independent scenarios. In the first, the inputs $\mathbf{Z}$ are chosen independently of the device's behaviour and therefore also independently of the adversary: $\mu(Z|e)=\mu(Z)$. In the second scenario, $\mathbf{Z}$ may share correlations with the device, restricted by a parameter known to the user. Each setting defines a different and characteristic set of allowed behaviours $\phyM$. 
However, as we will discuss in detail in Section \ref{sec:DIrandAmp}, both scenarios ultimately require an optimisation over conditional quantum behaviours $\mu(C|Z)$. We begin by analysing the first, simpler scenario, which provides the foundational tools necessary to address the more general case.

We then turn our attention to the set of conditional quantum behaviours, denoted by $\mathcal{Q}$. While this set is known to be convex \cite{tsirelson1993some}, no general explicit characterisation is currently available. Nonetheless, it can be efficiently approximated from the outside by a converging hierarchy of semidefinite programs, known as the Navascués-Pironio-Acín (NPA) hierarchy \cite{navascues2007bounding, navascues2008convergent}. In this work, we use the NPA hierarchy as a relaxation of the problem of characterising $\mathcal{Q}$. For simplicity, we will henceforth (somewhat abusively) identify $\mathcal{Q}$ with a fixed level of the NPA hierarchy. Note that using this relaxation to find the optimal PEF provides a lower bound on the solution of \eqref{eq:PEFoptnobeta}, since the new feasible set is a subset of the original one. This coincides with the intuition that giving the adversary a larger set of allowed behaviours still provides a valid, most likely suboptimal, entropy bound. 

Despite our efforts, this relaxation of optimisation \eqref{eq:PEFoptnobeta} is still challenging. It consists of optimising a logarithmic function over a convex domain of $F$. This domain is nonlinearly dependent on a variable $\mu$ which must belong to a chosen level of the NPA hierarchy (defined by semidefinite constraints). It is not known if one can directly incorporate the NPA constraints into \eqref{eq:PEFoptnobeta} in such a way that the resulting problem is solvable.
Therefore, we follow the approach in \cite{zhang2018certifying, knill2020generation} to further relax this problem.  We approximate the set $\mathcal{Q}$ from the outside by a convex polytope $\mathcal{P}_Q$. Then, due to convexity, it suffices to impose the PEF constraint on all the extreme points of this polytope, which we denote by the set $\textrm{Extr}(\mathcal{P}_Q)$. 
In the measurement independent scenario, this leads to the following relaxation of \eqref{eq:PEFoptnobeta}: 
\begin{equation}\label{eq:optPEFcond}	\begin{aligned}
			\max_F \quad & \sum_{c,z} p(c,z)\log F(c,z)\\
			\textrm{s.t.} \quad & F \geq 0\\
			&\sum_{c,z}F(c,z)\mu(c|z,e)p(z) \mu(d|z,e)^{\beta}\leq 1, \quad \forall \mu \in \textrm{Extr}(\mathcal{P_Q})\,,\\
		\end{aligned}
	\end{equation} 
where $p(z)=\mu(z)$ is assumed to be known or estimated by the user.

\section{Outer polytope approximations to the quantum set}
\label{sec:polytope}
	
We have seen in the last section that finding appropriate outer polytope approximations $\mathcal{P}_Q$ is a step toward obtaining effective entropy estimators \eqref{eq:entropy_estIID}. But what makes a good approximation? Ideally, polytopes $\mathcal{P}_Q$ should:
\begin{enumerate}
	\item Closely approximate the original quantum set $\mathcal{Q}$. This avoids the gross overestimation of the adversary's allowed behaviours, which can lead to unnecessarily suboptimal entropy estimators;
	\item Be simple enough such that \emph{(i)} it is possible to find all its extreme points and \emph{(ii)} the number of extreme points is not excessively large, allowing us to solve \eqref{eq:optPEFcond} in a reasonable amount of time. 
\end{enumerate}
	
Here we propose a method for designing polytopes $\mathcal{P}_Q$ that explicitly takes into account the typical conditional behaviour $p(C|Z)$, which allows these approximations to be effective and tractable. Our starting point is a polytope $\mathcal{P}_{\rm in}$ that contains $\qs$, for instance the set of non-signalling behaviours $\ns$ \cite{popescu1994quantum, BellReview2014}. We then obtain refined polytopes $\mathcal{P}_Q$ by iteratively eliminating supra-quantum behaviours\footnote{A behaviour is supra-quantum if it does not have a quantum realisation (see Section~\ref{sec:intro}).} that could significantly enhance the adversary's predictive power. For this, at each iteration, we tighten the approximation by imposing additional quantum Bell inequalities chosen to be tangent to $\qs$. The two main algorithms described in the following lead to PEFs that perform well in a variety of scenarios, as illustrated in the next section.
 	
 \begin{figure}[t]
 	\centering
 	\begin{subfigure}[t]{0.49\textwidth}
 		\centering
 		\includegraphics[width=0.94\linewidth]{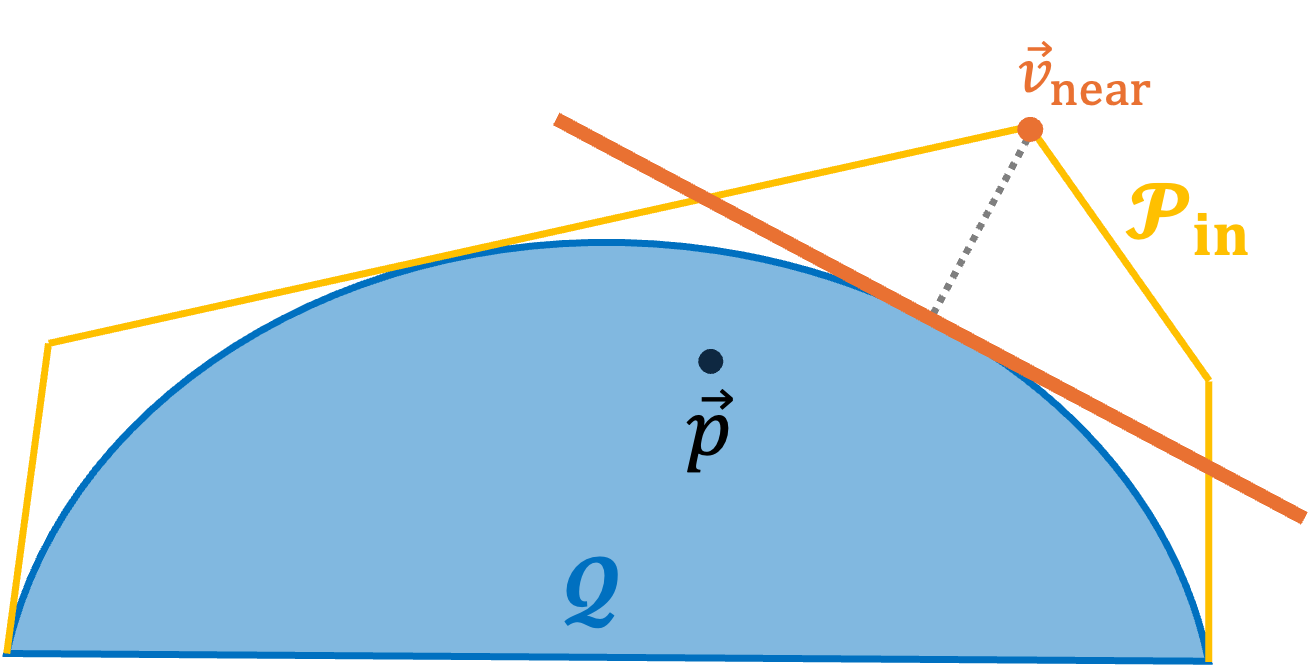}
 		\caption{Algorithm~\ref{alg:closestEP} using near non-quantum vertices}
 		\label{fig:diagram_closest_extreme}
 	\end{subfigure}
 	 \hfill
 	\begin{subfigure}[t]{0.49\textwidth}
 		\centering
 		\includegraphics[width=\linewidth]{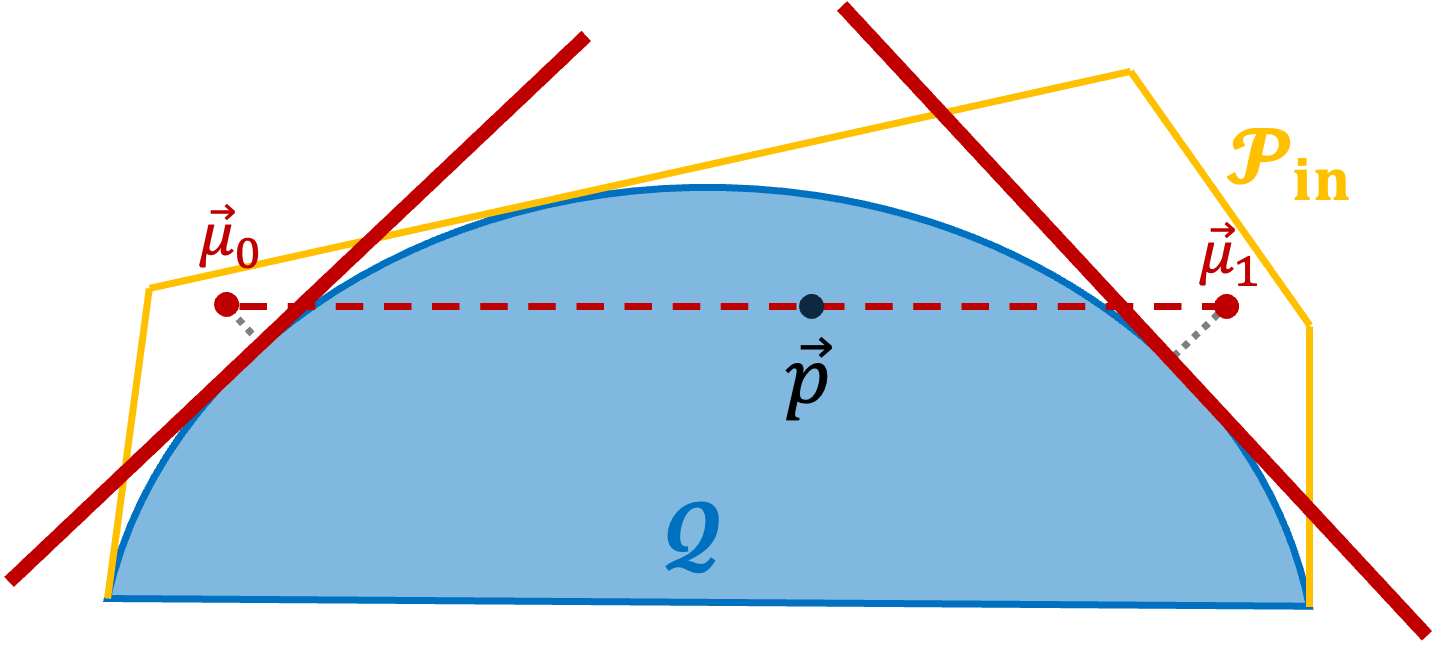}
 		\caption{Algorithm~\ref{alg:gp} using optimal guessing strategies}
 		\label{fig:diagram_min_entropy}
 	\end{subfigure}
 	\caption{\textbf{Illustrations of algorithms NearV (\ref{alg:closestEP}) and MaxGP ~(\ref{alg:gp}).} The blue region represents an approximation to the quantum set $\qs$ given by some level of the NPA hierarchy while the yellow lines represent the boundaries of a polytope $\poly_{\rm in}$ containing $\qs$.  Our aim is to generate a finer outer-polytope approximation $\poly_Q$ to $\qs$ taking into account the typical conditional behaviour $\vec{p}$ (black dot). 
 		(a) NearV (Algorithm~\ref{alg:closestEP}): Randomly choose one vertex $\vec{v}_{\rm near}$ from the set of $m$ nearest non-quantum vertices to $\vec{p}$ and generate a quantum Bell inequality $(\vec{b},\beta)$ (half-space delimited by the orange line, tangent to $\qs$) defined by the vector between $\vec{v}_{\rm near}$ and its closest $\vec{q}\in\qs$ (the grey dotted line). The output polytope $\poly_Q$ is the intersection of $\poly_{\rm in}$ with the quantum Bell inequality $(\vec{b},\beta)$.
 		(b) MaxGP (Algorithm~\ref{alg:gp}): Solve optimisation \eqref{eq:GP_A_givenxbar} for $\vec{p}$ and a set of allowed behaviours $\poly_{\rm in}$ and obtains the optimal adversarial guessing strategies $\{\vec{\mu}_\lambda\}_\lambda$ (in the illustration, red points with $|E|=2$). The algorithm uses each $\vec{\mu}_\lambda$ to generate a quantum Bell inequality (delimited by a red line) in a similar method to Algorithm~\ref{alg:closestEP}. The new polytope $\poly_Q$ is the intersection of $\poly_{\rm in}$ with at least one quantum Bell inequality. }
 	\label{fig:diagram_polytope_algorithms}
 \end{figure}

\subsection{Polytope approximations using nearest non-quantum vertices}
The first algorithm, NearV, aims at building an approximation $\mathcal{P}_Q$ by excluding the regions of $\mathcal{P}_{\rm in}$ nearest to selected supra-quantum vertices (see Figure~\ref{fig:diagram_polytope_algorithms}). To understand the intuition behind this approach, recall that while the user observes a behaviour $p(C|Z)$, an adversary having access to classical correlations may know any possible convex decomposition of this point into any behaviours of $\mathcal{P}_{\rm in}$. In particular, writing this polytope in its vertex (or V)-representation, $V=\{v_k(C|Z)\}_k$, we have
\begin{equation}\label{eq:cl_sideinfo}
	p(C|Z)=\sum_k \omega(k)v_k(C|Z)\,,
\end{equation}
where $\omega(k)\geq 0$ and $\sum_k\omega(k)=1$. 
On one hand, we would like to identify and eliminate the non-quantum vertices $v_k$ that appear in the decomposition \eqref{eq:cl_sideinfo}, i.e., that can be part of an adversary's strategy. These vertices are intuitively in the vicinity of $p(C|Z)$. On the other hand, removing supra-quantum extremal behaviours associated with smaller weights $\omega(k)$ (i.e., larger distances from $p(C|Z$)) avoids decompositions that contain a significant contribution of local vertices, which are always deterministic. 
Empirically, we find that an effective approach is to randomly select a vertex from the subset of the $m$ vertices closest to $p(C|Z)$ and then eliminate the supra-quantum region between this vertex and the quantum set $\mathcal{Q}$. This strategy has proven effective in reducing the adversary’s predictive power. The parameter $m$ is user-defined and depends on the setup.

We present the NearV algorithm using a vectorial representation of the space of behaviours, meaning that $\mu(C|Z)$ will  be represented by a vector $\vec{\mu}$ with components $\mu_{cz}$. A half-space $(\vec h ,\eta)$ is the set of vectors $\vec \mu$ such that
\begin{equation}\label{eq:halfspace}
	\vec h\cdot \vec \mu \leq \eta\,.
\end{equation}
A quantum Bell inequality $(\vec b,\beta)$ is a half-space which contains the quantum set $\qs$
\begin{equation}\label{eq:qBellineq}
	\vec b\cdot \vec \mu \leq \beta\,, \qquad \forall \vec \mu\in \qs \,.
\end{equation}

\begin{algorithm}[H]
	\caption{NearV: Generates a new polytope $\mathcal{P}_Q$ using the nearest non-quantum vertices to $p(C|Z)$.}\label{alg:closestEP}
	\begin{algorithmic}[1]
		\Require  $\vec{p}$ - typical conditional behaviour;  
		 $H= \{(\vec{h}_j,\eta_j)\}_j$ - $\poly_{\rm in}$ polytope in H-representation; 
		 $I$ - number of iterations; $m$ - size of list of nearest vertices
		\Ensure $\poly_Q$ in V-representation
	\For{$i \gets 1$ to $I$} 
	\State Express the polytope $H$ in its $V$-representation $V\gets \{\vec{v}_k\}_k$;
    \State Identify all the vertices $\{\vec{v}_{k'}\}_{k'}$ that are also in $\qs$ and define the list of quantum vertices $V_Q\gets \{\vec{v}_{k'}\}_{k'}$;
	\State Define the list of non-quantum vertices $V_{NQ}\gets V\backslash V_Q$;
    \State Compute the distance $d_{k}=||\vec v_{k}-\vec p||$ for every $\vec{v}_{k}\in V_{NQ}$ and define the list of the $m$ nearest vertices $D \gets \{(\vec{v}_l,d_l)\}_{l=1}^m$;
    \State  Select $\vec v_l \in D$ with probability $d_l^{-1}/(\sum_{k=1}^m d_k^{-1})$ and set $\vec{v}_{\rm near} \gets \vec{v}_l$;
	\State Find the behaviour $\vec q\in \qs$ that minimises $||\vec v_{\rm near}-\vec q||$;
	\State Define the Bell coefficients $\vec{b}\gets\vec v_{\rm near}-\vec{q}$;
	\State Find the quantum bound $\beta\gets\max_{\vec{\mu}\in\qs} \vec{b}\cdot\vec{\mu}$;
	\State Define the tighter polytope approximation $H\gets [H,(\vec{b},\beta)]$;
	\EndFor
	\State Express the polytope $H$ in its V-representation $V\gets \{\vec{v}_l\}_l$; 
	\State Output $V$
\end{algorithmic}
\end{algorithm}

Note that the NearV algorithm can easily be modified to take an input polytope $\poly_{\rm in}$ in its $V$-representation. A possible choice for $\poly_{\rm in}$ is the non-signalling set $\ns$, or any known polytope containing $\qs$. 
A method for finding the $H$-representation of $\ns$ can be found, for instance, in \cite{BellReview2014} and references therein. Vertex enumeration algorithms to find the V-representation of a convex polytope given its H-representation are widely available. We have arbitrarily chosen the total variation distance in our implementations, but this could be replaced by another efficiently computable distance. Likewise, one can use different  probability distributions to choose $\vec v_{\rm near}$ in Step 5. For instance, the uniform distribution also gave good results in some of our simulations. Finally, Steps 3, 7 and 9 require the use of the NPA relaxation algorithm.

\subsection{Polytope approximations using optimal guessing strategies}

Our second algorithm uses a more elaborate method to identify supra-quantum strategies that might significantly contribute to the predictive power of an adversary. The basic idea is to use an algorithm that finds the behaviours that maximise the adversary's probability of guessing $D$ for a typical distribution $p(C|Z)$. These optimal strategies are then used to define quantum Bell inequalities in a similar way to the NearV algorithm (see Figure~\ref{fig:diagram_polytope_algorithms}). Note that the adversarial strategies that maximise the guessing probability---equivalently minimise the min-entropy of $D$, see \eqref{def:Hmin} and \eqref{def:pguess}---are in general different from the adversarial strategies that minimise our single round entropy estimator from $(\log F)/\beta$ \eqref{eq:entropy_estIID} \footnote{We show in Appendix \ref{app:PEFmintradeoff} that the expected value of this estimator defines a min-trade-off function for the average conditional Shannon entropy.}. Nevertheless, as we will see later, removing the sets of supra-quantum behaviours identified by this approach significantly increases the amount of verifiable entropy using PEFs in various examples.

\paragraph{Algorithm to find optimal guessing strategies.} We start by reviewing the algorithm introduced by Nieto \textit{et al} \cite{nieto2014using}. It relies on the description of a general adversarial strategy to guess $D$, given that a particular input $z$ has been chosen. The adversary uses its side information $\Lambda$ to make a guess on $D$ and is successful whenever $\lambda=d$. This implies that it is enough to consider the number of strategies to be $|\Lambda|=|D|$ and the guessing probability can be written as
\begin{equation}
	P_{\rm guess}(D|z,\Lambda)=\max_{q_\Lambda,\{\mu_\lambda(C|Z)\}_\lambda }\left(\sum_d q_{\Lambda}(d)\mu_{d}(d|z)\right)\,,
\end{equation}
where $q_\Lambda$ is the probability distribution of $\Lambda$. Strategies $\mu_d(C|Z)=\mu(C|Z,\Lambda=d)$ belong to the set of allowed behaviours $\phyM$ and must be compatible with the behaviour $p(C|Z)$. 
We can absorb the normalisation factor $q_\Lambda(d)$ and define sub-normalised strategies $\tilde\mu_d$ as \begin{equation}
	\tilde\mu_d(C|Z)=q_{\Lambda}(d)\mu_d(C|Z)\,.
\end{equation}
The guessing probability can then be expressed as the following optimisation problem
\begin{align}
	\begin{split}	\label{eq:GP_A_givenxbar}
	P_{\rm guess}(D|z,\Lambda) = \max_{\{\tilde{\mu}_\lambda\}_\lambda}  & \sum_{d} \tilde{\mu}_d(d|z)\\
		\text{s.t. } & \sum_{d} \tilde{\mu}_d(C|Z) = p(C|Z)\,, \quad \forall \tilde{\mu}_d (C|Z)\in \widetilde{\phyM}\,,
	\end{split}
\end{align}
where $\widetilde{\phyM}$ is the set of sub-normalised allowed behaviours. 
The solution to this problem gives us also the set of optimal strategies $\{\mu_d(C|Z)\}_d$ for the chosen input $z$.

We are now in position to introduce our second algorithm to approximate the quantum set.

\begin{algorithm}[H]
	\caption{\textit{MaxGP}: Generates a polytope $\mathcal{P}_Q$ using the supra-quantum behaviours that maximise the guessing probability of $D$.}\label{alg:gp}
	\begin{algorithmic}[1]
		\Require  $\vec{p}$ - typical conditional behaviour; $p(Z)$ -  input distribution;
		$H= \{(\vec{h}_j,\eta_j)\}_j$ - $\poly_{\rm in}$ polytope in H-representation; 
		$I$ - number of iterations
		\Ensure $\poly_Q$ in V-representation
		\For{$i\gets1$ to $I$}
		\State Randomly pick a setting $z$ according to $p(Z)$;
		\State Obtain the behaviours $\{ \vec{\mu}_{\lambda}\}_{\lambda}$ that maximise $P_{\rm guess}(D|z,\Lambda)$ for the set of allowed behaviours $H$;
		\For{$\lambda\gets1$ to $|\Lambda|$}
		\If{$\vec{\mu}_\lambda\not\in\qs$}
		\State Find the behaviour $\vec q\in \qs$ that minimises $||\vec{\mu}_\lambda-\vec q||$;
		\State Define the Bell coefficients $\vec{b}\gets\vec{\mu}_\lambda-\vec{q}$;
		\State Find the quantum bound $\beta\gets\max_{\vec{\mu}\in\qs} \vec{b}\cdot\vec{\mu}$;
		\State Define the refined polytope approximation $H\gets [H,(\vec{b},\beta)]$;
		\EndIf
		\EndFor
		\EndFor
		\State Express the polytope $H$ in its V-representation $V\gets \{\vec{v}_l\}_l$; 
		\State  Output $V$
	\end{algorithmic}
\end{algorithm}

Note that we have chosen here to randomly pick the setting $z$ to compute the optimal strategies at every iteration of the algorithm. Other choices are possible, for instance picking $z$ that gives the highest $P_{\rm guess}(D|z,\Lambda)$, finding the strategies that maximise the guessing probability averaged over the inputs $z$ (see the approach in \cite{bancal2014more}), or simply choosing a fixed setting $z$.

\section{Randomness certification using new polytope approximations}
\label{sec:DIrandVerif}

In this section, we consider a variety of device-independent protocols for randomness certification and compute the amount of extractable entropy \eqref{eq:extractsmoothH} using the PE approach combined with polytope approximations $\mathcal{P}_Q$ generated by the algorithms introduced in the previous section. 

First, we consider the most standard protocols in which the entropy source $\mathbf{Z}$ is independent from the side information $E$, therefore relying on Bell tests. In the bipartite case, we analysed the global output entropy in noisy standard \cite{clauser1969proposed} and asymmetric \cite{acin2012randomness} CHSH correlations, and in  experimental data obtained from running loophole-free CHSH-Bell tests 
\cite{rosenfeld2017event, zhang2020experimental}. For the tripartite case, we studied the two-party output entropy in noisy Mermin correlations obtained through real experiments on a commercially available quantum computer (H1 from Quantinuum).  Our results are compared with the extractable entropy obtained in previous analyses with PE or with other commonly used methods, namely:
\begin{itemize}
	\item The PE approach using the original polytope approximations \cite{knill2020generation} and approximations introduced in \cite{bierhorst2020tsirelson};
	\item The randomness certification method for classical side information described in \cite{pironio2013security,nieto2014using,nieto2018severalBell}. This is one of the first approaches to successfully certify entropy in a non-IID setting;
	\item The EAT method, as introduced in \cite{Dupuis:2020EAT}, which considers quantum side information. This result (and its variations) is the most commonly used method to certify entropy under coherent attacks. 
\end{itemize}

We also consider protocols where the input entropy source might be correlated  with the side information $E$, meaning that inputs may be correlated to the state of the device in a manner known to the adversary. These are commonly referred to as \emph{randomness amplification} protocols. We analyse the correlations arising in a practical protocol proposed in  \cite{ramanathan2018practical} and compare with their results. 

Note that when analysing theoretical noisy correlations or experimental data with no prior calibration data, we consider the typical behaviour $p$ to coincide with the observed $p_{\rm obs}$.

\subsection{Protocols using an independent entropy source}

When the inputs $\mathbf{Z}$ are chosen independently of the state of the device, finding a PEF adapted to the typical behaviour $p(C,Z)$ involves solving the optimisation problem \eqref{eq:optPEFcond} for a set of allowed conditional behaviours, as described in Section~\ref{sec:findPEF}. In this section, inputs are therefore independent of the side information $p(Z):=\mu(Z|e)=\mu(Z)$. We have chosen uniformly distributed inputs $p(Z)=1/|Z|$ for simplicity, but other choices would be possible (for instance, we could have chosen distributions biased towards a particular input, like in spot-checking protocols).

\subsubsection{Standard and asymmetric noisy CHSH correlations}\label{sec:noisyCHSH}

Consider the bipartite Bell scenario, with $C=AB$ and $Z=XY$, and the asymmetric CHSH inequality \cite{clauser1969proposed,acin2012randomness}
\begin{equation}
	\label{eq:tilted_ineq}
	\textrm{CHSH}_\alpha:=\alpha\Var{\mathsf{A}_0\mathsf{B}_0} + \alpha\Var{\mathsf{A}_0\mathsf{B}_1} + \Var{\mathsf{A}_1\mathsf{B}_0} - \Var{\mathsf{A}_1\mathsf{B}_1} \leq 2\alpha\,,
\end{equation}
where the local observables $\mathsf{A}_x$ and $\mathsf{B}_y$ can take values $\pm 1$, $\Var{\cdot}$ represents their expected value, for instance $\Var{\mathsf{A}_x}=\sum_a ap(a|x)$, and $2\alpha$ is the classical bound of the expression. For $\alpha=1$, we recover the standard CHSH Bell inequality while $\alpha>1$ corresponds to the asymmetric case. The maximum quantum Bell violation of \eqref{eq:tilted_ineq} is achieved when
\begin{equation}\label{eq:idealcorr_asym}
	\begin{split}
		\Var{\mathsf{A}_x}&=\Var{\mathsf{B}_y}=0\,,\\\Var{\mathsf{A}_0\mathsf{B}_y}=\frac{\alpha}{\sqrt{1+\alpha^2}}\,,&\qquad \Var{\mathsf{A}_1\mathsf{B}_y}=\frac{(-1)^y}{\sqrt{1+\alpha^2}}\,,
	\end{split}
\end{equation}
which corresponds to the conditional behaviour $p^{*}_{\alpha}(A,B|X,Y)$ via $p(a,b|x,y)=1/4(1+\Var{\mathsf{A}_x}+\Var{\mathsf{B}_y}+ab \Var{\mathsf{A}_x\mathsf{B}_y})$. 

Here we analyse protocols with typical conditional behaviours that are convex mixtures of the optimal behaviour \eqref{eq:idealcorr_asym} and white noise
\begin{equation}\label{eq:noisyCHSH}
	p_{\alpha,w}(A,B|X,Y)=(1-w) p^{*}_{\alpha}(A,B|X,Y)+ w\frac{1}{4}\,,
\end{equation}
for $\alpha=1$ (standard CHSH) and $\alpha=8$, and a variety of noise levels $w\in[0,1]$.

\paragraph{Obtained results.} Figures \ref{fig:bi_randVeri_005WN}-\ref{fig:bi_tilted_07WN} depict lower bounds on the extractable entropy per round $\frac{1}{n} H_{\min}^{\varepsilon/p_{Acc}}(\mathbf{AB}|\mathbf{XY},E)$ \eqref{eq:extEntIID} as a function of the number of rounds $n$ of the protocol. We considered acceptance thresholds \eqref{eq:accept_threshold} with $\delta=0$ and security parameter $\varepsilon=2^{-128}$. Note that we used $\poly_{\rm in}=\ns \cap \textrm{CHSH} \leq 2\sqrt 2$, i.e. the intersection between $\ns$ and the behaviours that do not violate CHSH beyond its quantum bound, as the input for the NearV and MaxGP algorithms, which corresponds to the set of allowed behaviours used in \cite{knill2020generation}. 

\begin{figure}[t]
	\centering
	\begin{subfigure}[t]{0.327\linewidth}
		\centering
		\includegraphics[width=\linewidth]{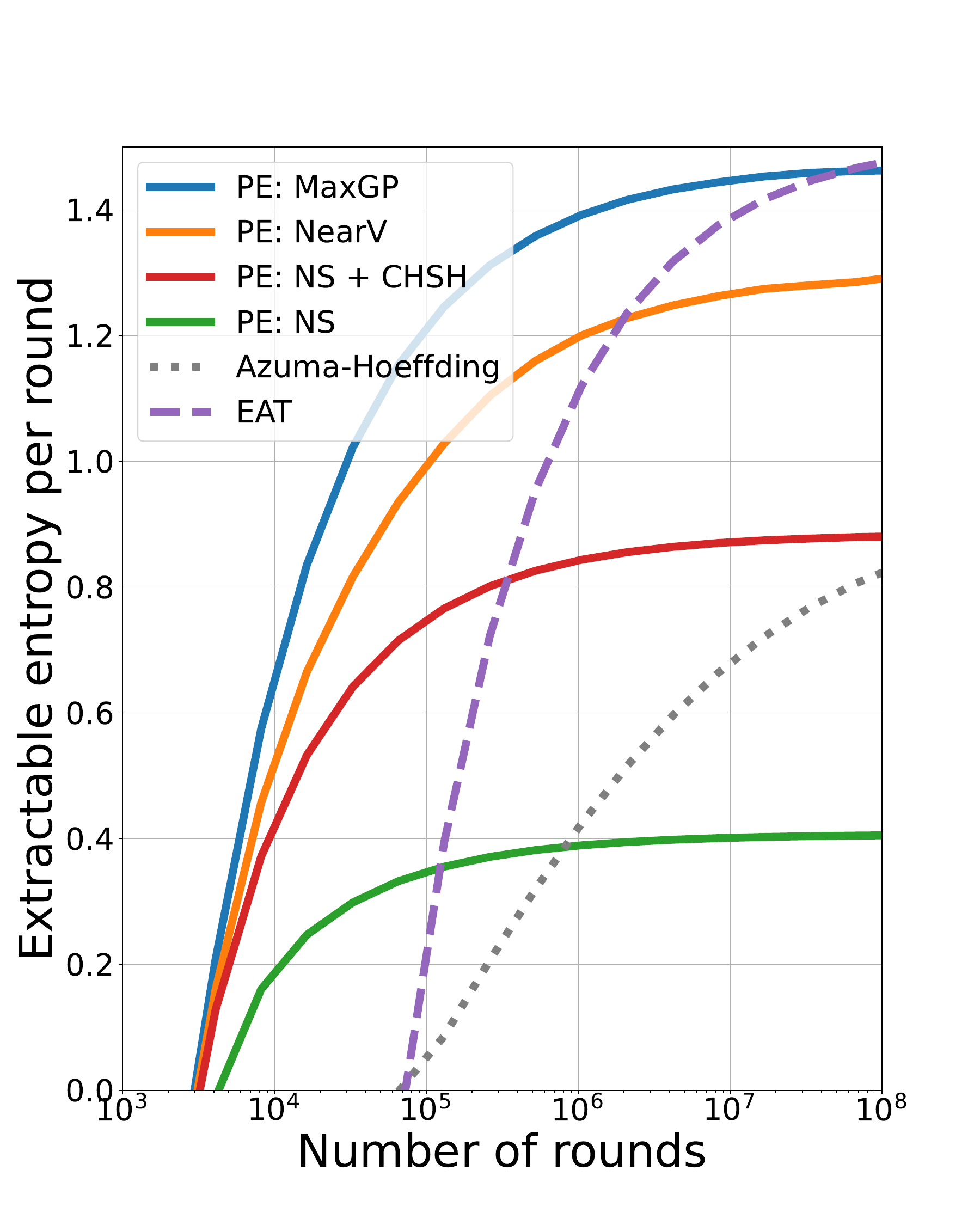}
		\caption{w=0.5\% }
		\label{fig:bi_randVeri_005WN}
	\end{subfigure}
	\hfill
	\begin{subfigure}[t]{0.327\linewidth}
		\centering
		\includegraphics[width=\linewidth]{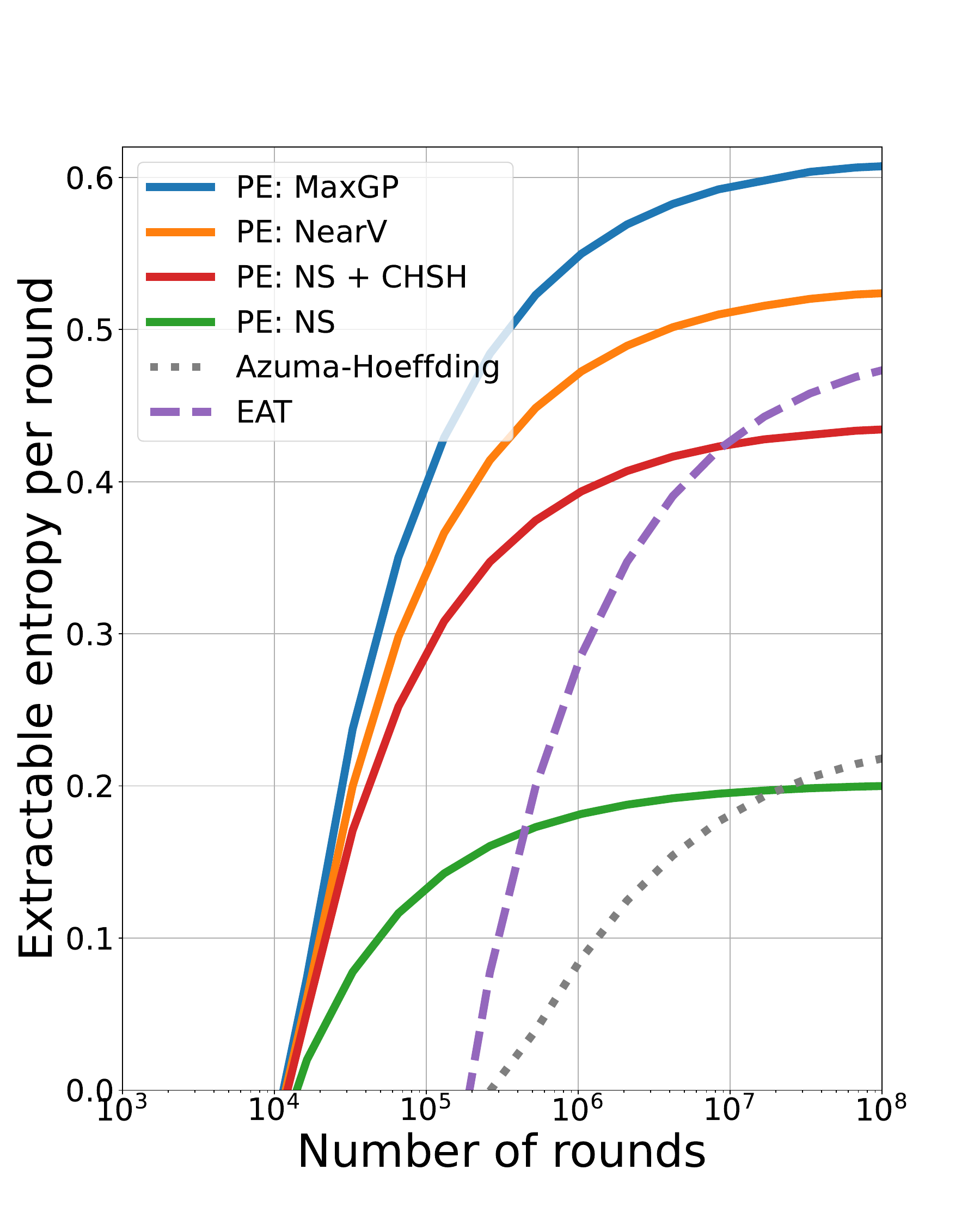}
		\caption{w=15\%}
		\label{fig:bi_randVeri_15WN}
	\end{subfigure}
	\hfill
	\begin{subfigure}[t]{0.327\linewidth}
		\centering
		\includegraphics[width=\linewidth]{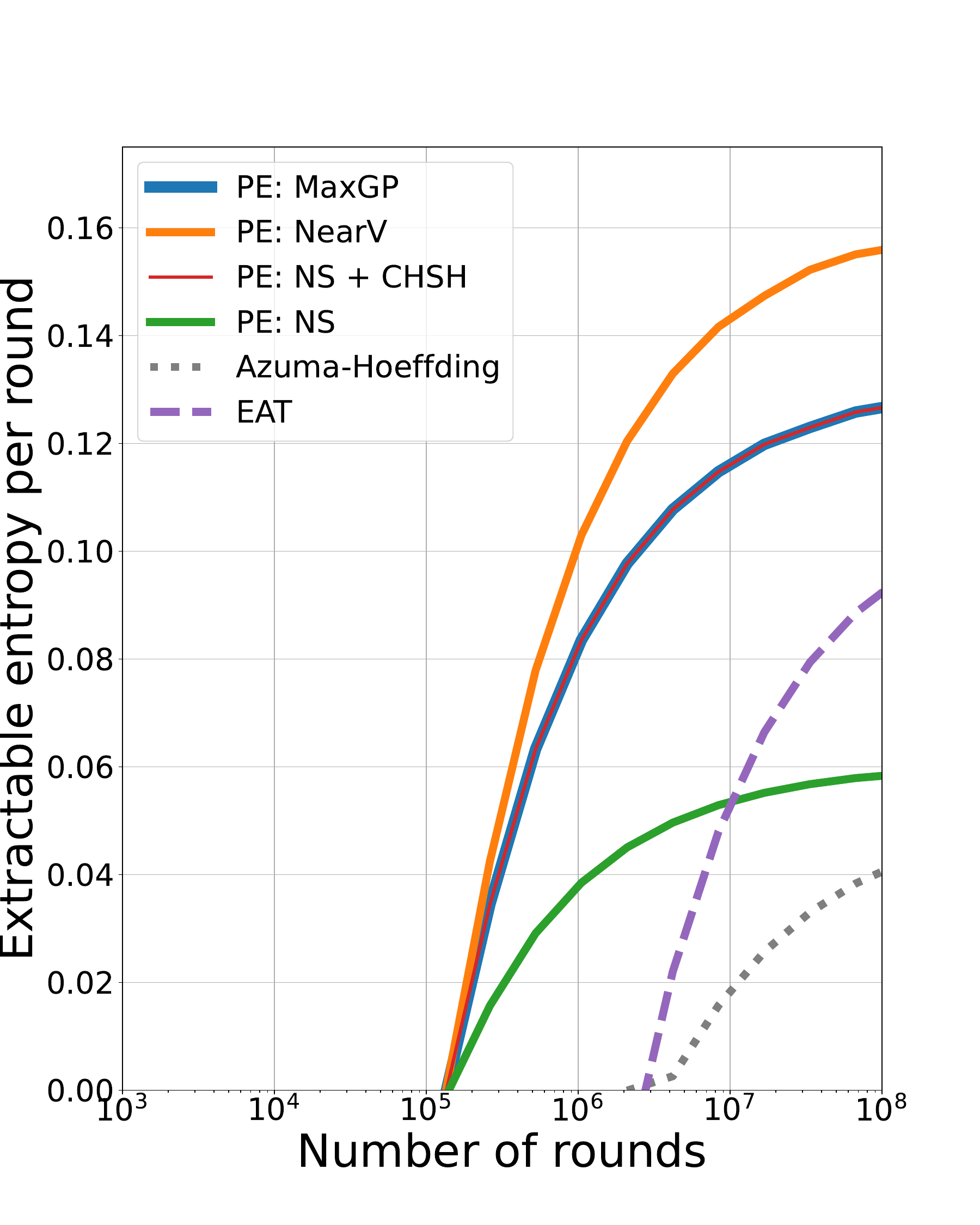}
		\caption{w=25\%}
		\label{fig:bi_randVeri_25WN}
	\end{subfigure}
	\caption{\textbf{Extractable entropy rates from noisy CHSH correlations for different levels of white noise.} Solid lines are obtained through the PE technique using different sets of allowed conditional behaviours: (orange) $\mathcal{P}_Q$ from 10 iterations of NearV (Algorithm~\ref{alg:closestEP} with $m=10$); (blue) $\mathcal{P}_Q$ from 10 iterations of MaxGP (Algorithm~\ref{alg:gp}); (green) $\poly_Q=\ns$; and (red) $\poly_Q=\ns \cap \textrm{CHSH} \leq 2\sqrt 2$, which coincides with set used in \cite{knill2020generation}. The purple dashed line is obtained using the EAT approach \cite{arnon2018practical} and the grey dotted line represents the method \cite{pironio2013security,nieto2014using} using the Azuma-Hoeffding inequality. The figure is produced using level 2 of the NPA hierarchy.}
	\label{fig:bi_randVeri}
\end{figure}

\begin{figure*}[t]
	\centering
	\begin{subfigure}[t]{0.327\linewidth}
		\centering
		\includegraphics[width=\linewidth]{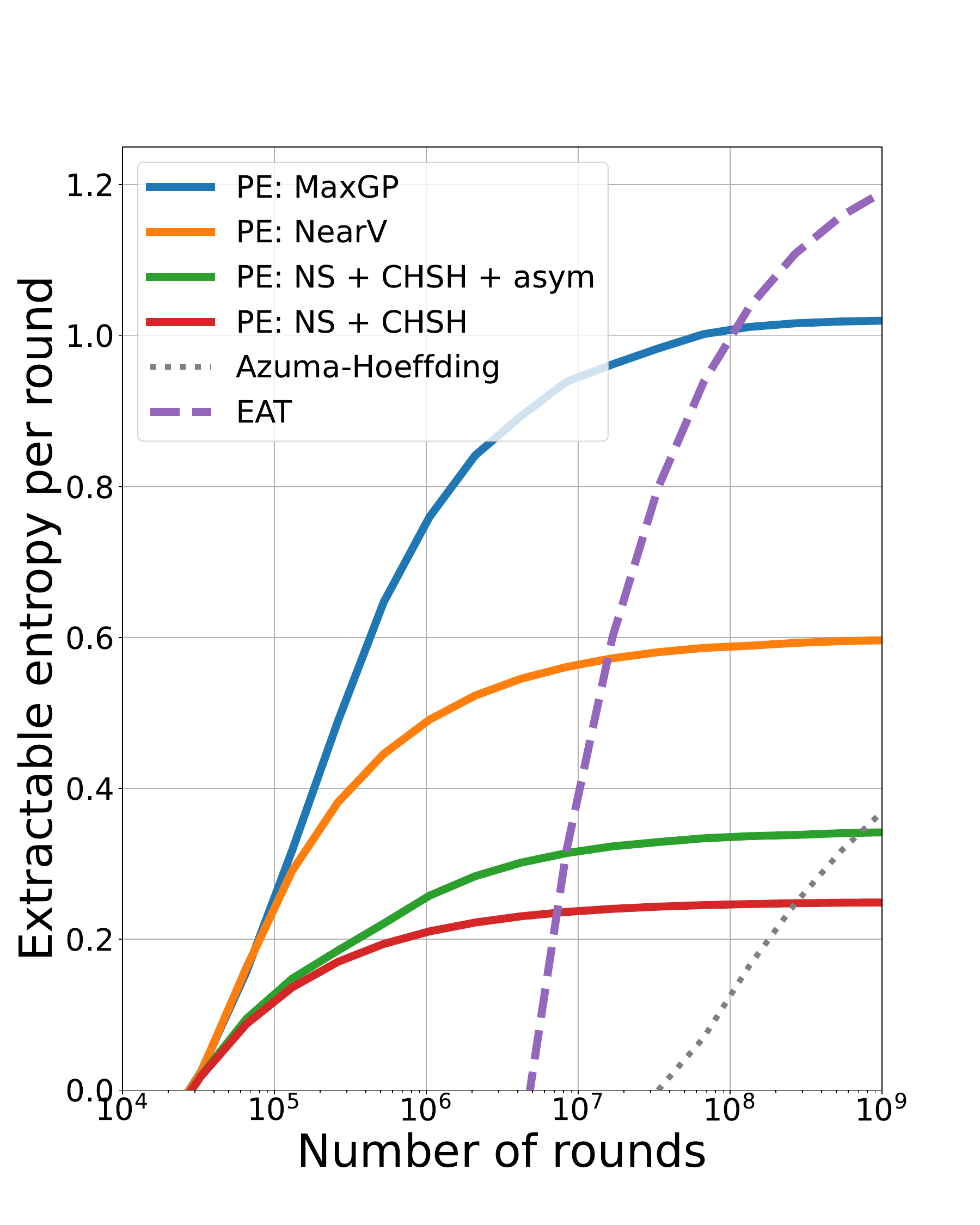}
		\caption{w=0.1\%}
		\label{fig:bi_tilted_001WN}
	\end{subfigure}
	\hfill
	\begin{subfigure}[t]{0.327\linewidth}
		\centering
		\includegraphics[width=\linewidth]{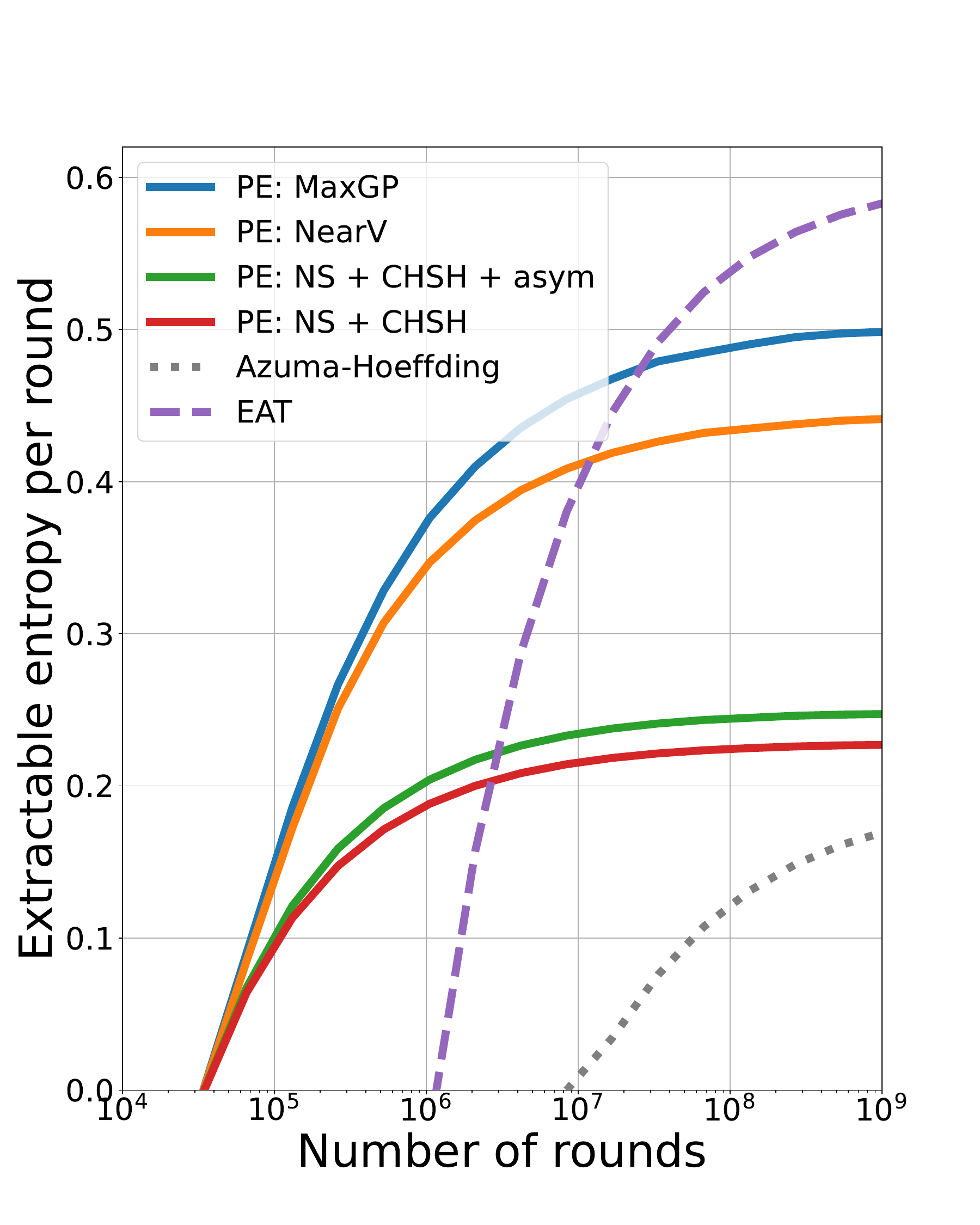}
		\caption{w=1\%}
		\label{fig:bi_tilted_01WN}
	\end{subfigure}
	\hfill
	\begin{subfigure}[t]{0.327\linewidth}
		\centering
		\includegraphics[width=\linewidth]{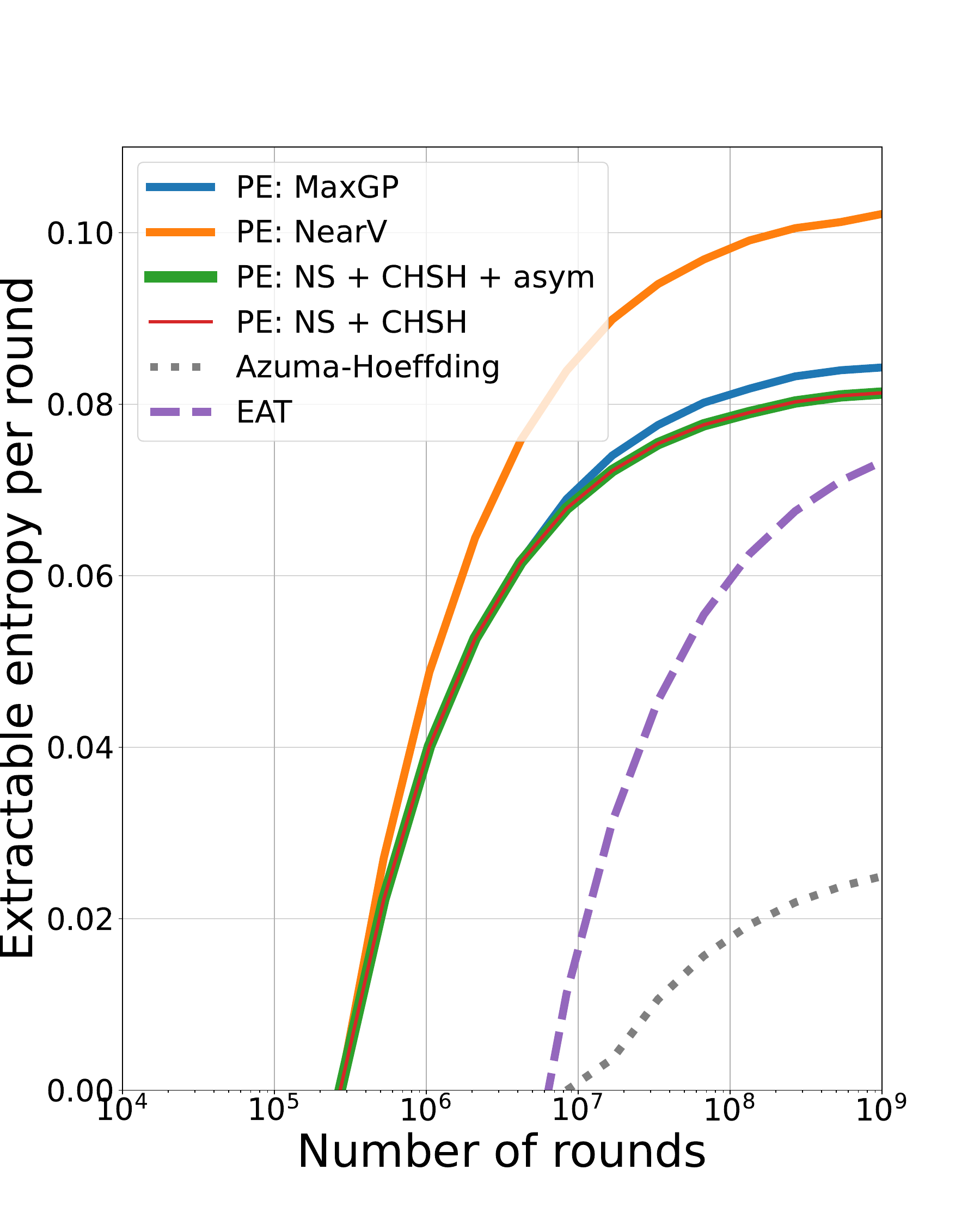}
		\caption{w=7\%}
		\label{fig:bi_tilted_07WN}  
	\end{subfigure}
	\caption{\textbf{Extractable entropy rates from noisy asymmetric CHSH correlations ($\alpha=8$) for different levels of white noise.} Solid lines are obtained through the PE technique using sets of allowed conditional behaviours: (orange) $\mathcal{P}_Q$ from 10 iterations of NearV (Algorithm~\ref{alg:closestEP} with $m=10$); (blue) $\mathcal{P}_Q$ from 10 iterations of MaxPG (Algorithm~\ref{alg:gp}); (red) - $\poly_Q=\ns \cap \textrm{CHSH} \leq 2\sqrt 2$ (set defined according to the approach in \cite{knill2020generation}); and (green) - $\poly_Q=\ns \cap \textrm{CHSH} \leq 2\sqrt 2 \cap \textrm{CHSH}_{\alpha=8}\leq TB$ (Tsirelson's bound) (set used in \cite{bierhorst2020tsirelson}). The purple dashed line is obtained using the EAT approach \cite{arnon2018practical}, and the grey dotted line uses the method in \cite{pironio2013security,nieto2014using}, which uses the Azuma-Hoeffding inequality. The figure is produced using level 2 of the NPA hierarchy.}
	\label{fig:bi_tilted}
\end{figure*}

\subsubsection{Noisy correlations from loophole-free CHSH-Bell tests}

We report here in Table~\ref{tab:Mermin_data} the improvements on the extractable entropy rates for data from loophole-free CHSH-Bell tests reported in \cite{rosenfeld2017event, zhang2020experimental}. In particular,
note that we obtained almost twice the entropy from the data in \cite{zhang2020experimental}, while previous attempts with refined polytopes \cite{bierhorst2020tsirelson} only achieved a marginal improvement.  
\begin{table}[H]
    \centering
    \begin{tabular}{|c|c||c|c|c|c|c|}
    \hline
    \multicolumn{2}{|c||}{} & & & & Extractable & Extractable\\
    \multicolumn{2}{|c||}{Data set} & $n$ & $\textrm{CHSH}_{\alpha=1}$ & $\varepsilon$ & entropy rate & entropy rate\\
    \multicolumn{2}{|c||}{} &&&& before$^*$ & with our methods\\
    \hline
    \hline
    \multicolumn{2}{|c||}{\cite[Table I]{knill2020generation}, \cite{rosenfeld2017event}} & 27,683 & 2.1756226 & $2^{-32}$ & 0.04119278 & 0.05010119\\
    \hline
    \multirow{5}{*}{\cite{zhang2020experimental}} & Instance 1 & 23,194,880 & 2.0011386 & $2^{-64}$ &0.00033189 & 0.00056707\\
    & Instance 2 & 37,649,791 & 2.0011957 & $2^{-64}$ & 0.00021068 & 0.00047694\\
    & Instance 3 & 28,537,805 & 2.0010692 & $2^{-64}$ & 0.00030063 & 0.00052014\\
    & Instance 4 & 28,319,978 & 2.0010454 & $2^{-64}$ & 0.00030474 & 0.00050788\\
    & Instance 5 & 27,219,887 & 2.0011109 & $2^{-64}$ & 0.00028706 & 0.00050131\\
    \hline
    \end{tabular}
    \caption{Extractable entropy rates from loophole-free CHSH-Bell tests \cite{rosenfeld2017event,zhang2020experimental}. Our method improves the extractable entropy rates compared to those previously reported. 10 iterations of NearV (Algorithm~\ref{alg:closestEP} with $m=10$) is used to obtain the result for \cite[Table I]{knill2020generation}\cite{rosenfeld2017event}, and MaxGP (Algorithm~\ref{alg:gp}) is used to obtain the results for \cite{zhang2020experimental}. 
    $^*$For \cite{zhang2020experimental} these entropy rates are calculated with the optimal PEFs reported in their Table IV, Appendix IV, and for \cite[Table I]{knill2020generation} we obtain the rate with $\poly_Q=\ns \cap \textrm{CHSH} \leq 2\sqrt 2$ which was used in their analysis.
    }
    \label{tab:loophole_free_data}
\end{table}

\subsubsection{Noisy Mermin correlations}

A well-studied Bell inequality in the tripartite scenario is the Mermin inequality 
\begin{equation}
		\label{eq:mermin_ineq}
		\textrm{M}=\Var{\mathsf{A}_0\mathsf{B}_0\mathsf{C}_0} - \Var{\mathsf{A}_0\mathsf{B}_1\mathsf{C}_1} - \Var{\mathsf{A}_1\mathsf{B}_0\mathsf{C}_1} - \Var{\mathsf{A}_1\mathsf{B}_1\mathsf{C}_0} \leq 2\,
\end{equation}
where local observables $\mathsf{A}_x$, $\mathsf{B}_y$  and $\mathsf{C}_z$ can take values $\pm 1$. The maximum quantum violation (and algebraic bound) of this inequality is $\textrm{M}=4$, which can be achieved by measuring the GHZ state $\frac{1}{\sqrt 2}(\ket{000}+\ket{111})$ with the Pauli operators $\mathsf{A}_0=\mathsf{B}_0=\mathsf{C}_0=\sigma_x$ and $\mathsf{A}_1=\mathsf{B}_1=\mathsf{C}_1=\sigma_y$. 

To showcase our method and the effectiveness of our algorithms, we have implemented the Mermin-Bell test on an ion-trap quantum computer (H1 from Quantinuum \cite{QuantinuumH1})\footnote{While one cannot guarantee the non-signalling condition in experiments performed on a single quantum computer, the negligible levels of cross-talk in H1 support the assumption that local measurements are indeed independent across different locations.} with $n=4\times 10^4$ rounds, from which we obtained a behaviour achieving $\textrm{M}=3.928$ shown in Table~\ref{tab:Mermin_data}. We then analyse the extractable entropy of outputs $\mathbf{D=AB}$ using the typical behaviour estimated by \eqref{eq:freq} and regularised according to \cite{lin2018device}. The results depicted in Figure~\ref{fig:NetEntR_vs_N_tri} include the extractable entropy from the same behaviour as if it were obtained with a different $n$.

\begin{table}[h!]
    \centering
    \begin{tabular}{|c||c|c|c|c|c|c|c|c|}
        \hline
        \backslashbox{abc}{xyz}
        &000&001&010&011&100&101&110&111 \\\hline\hline
        000 & 0.0044 & 0.1258 & 0.1302 & 0.2514 & 0.1282 & 0.2414 & 0.2484 & 0.1236 \\\hline
        001 & 0.247 & 0.129 & 0.127 & 0.0034 & 0.1254 & 0.0024 & 0.0024 & 0.1358 \\\hline
        010 & 0.244 & 0.1252 & 0.115 & 0.0014 & 0.124 & 0.0018 & 0.0016 & 0.1242 \\\hline
        011 & 0.002 & 0.126 & 0.125 & 0.248 & 0.1194 & 0.2546 & 0.2468 & 0.1292 \\\hline
        100 & 0.2536 & 0.1204 & 0.1252 & 0.002 & 0.1264 & 0.003 & 0.0024 & 0.125 \\\hline
        101 & 0.0014 & 0.1256 & 0.1232 & 0.2352 & 0.1202 & 0.2472 & 0.2518 & 0.119 \\\hline
        110 & 0.0018 & 0.1228 & 0.1304 & 0.2556 & 0.1318 & 0.2486 & 0.2446 & 0.1208 \\\hline
        111 & 0.2458 & 0.1252 & 0.124 & 0.003 & 0.1246 & 0.001 & 0.002 & 0.1224 \\\hline
    \end{tabular}
    \caption{
    Estimated probabilities $p(A,B,C|X,Y,Z)$ from running the Mermin game in H1 from Quantinuum, with $n=4\times10^4$. Raw data before regularisation, achieving $\textrm{M}=3.928$.
    }
    \label{tab:Mermin_data}
\end{table}

\begin{figure}[t]
	\centering
	\includegraphics[width=0.6\textwidth]{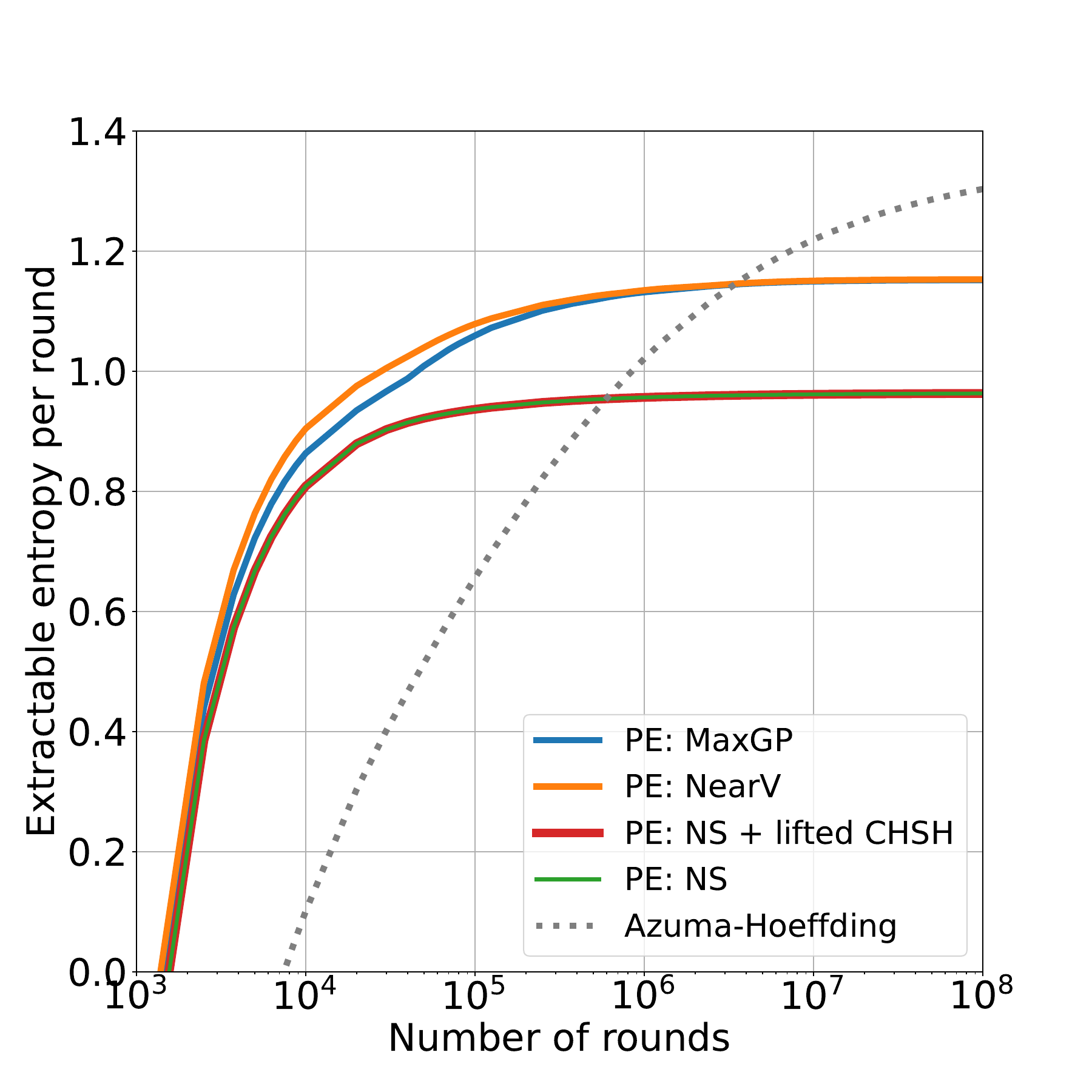}
	\caption{\textbf{Extractable entropy rates from noisy Mermin correlations obtained from Table \ref{tab:Mermin_data} as a function of the number of rounds $n$.} Solid lines correspond to lower bounds computed with the PE technique using  $\poly_Q$ obtained through: (orange) NearV algorithm (Algorithm~\ref{alg:closestEP} with $m=1$); (blue) MaxGP algorithm (Algorithm~\ref{alg:gp});  (green) $\poly_Q=\ns$; (red) $\poly_Q=\ns \cap  (\textrm{lifted-CHSH}\leq 2\sqrt{2}$), where the lifted-CHSH is a generalisation of CHSH for 3 parties \cite{pironio2005lifting}. The grey dotted line is obtained by approach \cite{pironio2013security} using analytical optimal $P_{\rm guess}$ as a function of the Mermin value \cite{Woodhead2018randomnessversus}. The figure is produced using level 2 of the NPA hierarchy.}
	\label{fig:NetEntR_vs_N_tri}
\end{figure}

\subsection{Protocols using an entropy source correlated with the device} \label{sec:DIrandAmp}
Randomness amplification protocols allow one to obtain certified output entropy even if inputs are chosen from a weakly random source that is partly correlated with the quantum device. However, this comes at the price of more complex security analysis and lower output entropy rates. While one would expect sacrificing efficiency to achieve higher levels of security, how much entropy remains uncertified due to a suboptimal security analysis?

In this section, we show how to use the PE framework to lower-bound the extractable entropy of randomness amplification protocols that use a Santa Vazirani (SV)-source~\cite{santha1986generating} as an input. These entropy sources produce bits $S_i$ that, at every round, deviate from being perfectly random up to some bias parameter $0\leq \delta < 1/2$,
\begin{equation}\label{def:SVsource}
	\frac{1}{2}-\delta\leq \mu(S_i|S_{i-1},\ldots, S_{1},e)\leq \frac{1}{2}+\delta\,, \quad \forall i\,.
\end{equation} 
We illustrate our method by analysing the correlations arising in the bipartite protocol in \cite{ramanathan2018practical} based on the Hardy paradox \cite{HardyParadox}.

\subsubsection{Algorithm to find PEFs for correlated input source}

In the general case where the input randomness can be correlated with the quantum device, $\mu(Z|e)$ is no longer known to the user and the simplified optimisation \eqref{eq:optPEFcond} to find a PEF no longer holds. We must then go back to \eqref{eq:PEFoptnobeta}, which we reproduce here:

	\begin{align} 
		\max_F\quad & \sum_{c,z} p(c,z)\log F(c,z) \\
		\textrm{s.t.} \quad & F \geq 0\\
		&\sum_{c,z}\mu(c,z|e)F(c,z)\mu(d|z,e)^\beta\leq 1\,,\quad \forall \mu(C,Z|e) \in \phyM\,.\label{eq:const2PEF}
	\end{align}
As before, solving this optimisation requires the set $\phyM$ to be a polytope, in which case it is sufficient to impose the constraint \eqref{eq:const2PEF} only on the extreme points of this set. Given that we consider quantum adversaries, $\phyM$ is in general not a polytope, and we must find an outer-polytope approximation $\mathcal{P}_{\phyM}$ of $\phyM$. Although the allowed behaviours are now joint input-output distributions, this task will use the methods introduced in Section~\ref{sec:polytope} as a sub-routine.

We consider here a bipartite protocol where, at each round, the inputs of $\mathsf{A}$ and $\mathsf{B}$ are defined by two samples of a SV-source \eqref{def:SVsource}\footnote{We still require that the input of each party is not leaked to the other party; otherwise any distribution can be reproduced by classical strategies.}. Therefore, the distribution of Alice's input $X$ at any round belongs to the polytope $\mathcal{SV}_{\delta}$  defined by 
\begin{equation} \label{eq:1SVpolytope}
	\frac{1}{2}-\delta\leq \mu(X|\lambda)\leq \frac{1}{2}+\delta\,,
\end{equation}
where $\lambda$ represents all the side information available to the adversary in that round, i.e., the information of past samples of the SV-source and any additional information $e$ available. In full generality, we assume that Bob's input $Y$ is generated after $X$. The probability distribution of $Y$ also belongs to the polytope $\mathcal{SV}_{\delta}$ given by 
\begin{equation} \label{eq:1SVpolytopeB}
	\frac{1}{2}-\delta\leq \mu(Y|\lambda')\leq \frac{1}{2}+\delta\,,
\end{equation}
where the information $\lambda'$ contains both $\lambda$ and $x$.
The following theorem allows us to combine the sets of allowed input distributions for each party into a single convex polytope\footnote{We could have also considered an SV-source where at each round the condition  $l\leq \mu(X,Y|\lambda)\leq h$ holds. The vertices of this polytope are described in Lemma 1 of \cite{putz2016measurement}.}.
\begin{theorem}
	Consider two sets of inputs $X$ and $Y$ modelled by two samples of a $\delta$-SV-source \eqref{def:SVsource}, i.e., each following probability distributions  belonging to a convex polytope $\mathcal{SV}_{\delta}$ according to \eqref{eq:1SVpolytope} and \eqref{eq:1SVpolytopeB}, respectively.  Assume, without loss of generality, that  $X$ is sampled before $Y$. Then the set of allowed joint input distributions $\mathcal{SV}^2_{\delta}$ for both inputs is a polytope with vertices $u_{kk'}(X,Y)=v_k(X)v'_{k'}(Y)$, for $k,k'=0,1$, where $v_k(X)=\frac{1}{2}\left(1+(-1)^{k+X}\delta\right)$ are vertices of $\mathcal{SV}_{\delta}$ for party $\mathsf{A}$, and analogously for party $\mathsf{B}$. 
\end{theorem}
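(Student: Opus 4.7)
The plan is to identify $\mathcal{SV}^2_\delta$ with the convex hull of product distributions $\mu(X)\mu(Y)$ in which each marginal belongs to $\mathcal{SV}_\delta$. This is the set of joint input distributions that can arise once one averages, over the adversary's side information $\lambda$, a pair of SV-bounded samples drawn (conditionally) independently at each fixed $\lambda$. Once this identification is made, the remaining argument becomes essentially algebraic.

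For the algebraic part, I would first observe that $\mathcal{SV}_\delta$ itself is a one-dimensional polytope whose two extreme points---obtained by saturating the upper or lower SV bound---coincide with the $v_0$ and $v_1$ given in the statement. Hence any $\mu(X)\in\mathcal{SV}_\delta$ admits a unique convex decomposition $\mu(X)=\omega\,v_0(X)+(1-\omega)\,v_1(X)$ with $\omega\in[0,1]$, and similarly $\mu(Y)=\eta\,v'_0(Y)+(1-\eta)\,v'_1(Y)$. Expanding the product yields
\[
\mu(X)\mu(Y)=\omega\eta\,u_{00}+\omega(1-\eta)\,u_{01}+(1-\omega)\eta\,u_{10}+(1-\omega)(1-\eta)\,u_{11},
\]
whose four coefficients are non-negative and sum to $[\omega+(1-\omega)][\eta+(1-\eta)]=1$. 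Thus every product distribution is already a convex combination of the four $u_{kk'}$, and hence $\mathcal{SV}^2_\delta\subseteq\operatorname{conv}\{u_{kk'}\}$. The reverse inclusion is immediate since each $u_{kk'}$ is itself a product of two $\mathcal{SV}_\delta$ vertices, so $\mathcal{SV}^2_\delta=\operatorname{conv}\{u_{kk'}\}$ is a polytope.

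To confirm that the four $u_{kk'}$ are the genuine vertices of this polytope (and not a redundant spanning set), I would verify their affine independence. Setting $p=1/2+\delta$ and $q=1/2-\delta$, a short computation in the coordinates $(\mu(00),\mu(01),\mu(10),\mu(11))$ shows that the three displacement vectors $u_{01}-u_{00}$, $u_{10}-u_{00}$, $u_{11}-u_{00}$ equal $(p-q)$ times $(-p,p,-q,q)$, $(-p,-q,p,q)$, and $(-1,0,0,1)$ respectively, and a direct check confirms that these three vectors are linearly independent as long as $\delta>0$. Consequently the four $u_{kk'}$ span a $3$-simplex in the probability simplex, and each is an extreme point.

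The main obstacle, I anticipate, is the very first step. The hypothesis phrases the SV constraint on $Y$ through $\lambda'=(\lambda,x)$, which in principle allows $\mu(Y|X=0,\lambda)\neq\mu(Y|X=1,\lambda)$, so at a fixed $\lambda$ the joint takes the form $r(x)\,s_x(y)$ rather than a genuine product. One must argue, either by appealing to the standard SV-source semantics or by absorbing any $x$-dependence of the conditional into extra components of the hidden variable and re-averaging, that the resulting convex set of joint input distributions still coincides with $\operatorname{conv}\{u_{kk'}\}$. The subsequent algebraic steps then follow routinely.
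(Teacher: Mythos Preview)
Your argument mirrors the paper's almost line for line: the paper also writes $\mu(X|\lambda)=\sum_k\omega(k)v_k(X)$ and $\mu(Y|X,\lambda)=\sum_{k'}\omega'(k')v'_{k'}(Y)$, expands the product, and observes that the coefficients $\omega(k)\omega'(k')$ are non-negative and sum to one, concluding that $\mathcal{SV}^2_\delta$ is the polytope with vertices $u_{kk'}$. It does not spell out the reverse inclusion or verify that the four $u_{kk'}$ are affinely independent, so on those two points your treatment is strictly more careful than the paper's.

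The obstacle you flag in your last paragraph is real, and it is exactly the step the paper glosses over: although the paper's own definition of $\mathcal{SV}^2_\delta$ explicitly allows $\mu(Y|x,\lambda)$ to vary with $x$, its proof silently uses a single, $x$-independent weight vector $\omega'$. So you are not missing an argument that the paper supplies---the paper simply does not address it. Be aware, however, that your suggested resolution (absorbing the $x$-dependence into an enlarged hidden variable and re-averaging) cannot recover exactly $\operatorname{conv}\{u_{kk'}\}$ from the literal definition: the distribution obtained from $r=v_0$, $s_0=v'_0$, $s_1=v'_1$ satisfies all the stated SV constraints yet lies outside the four-point hull (a direct computation gives negative barycentric weight on $\{u_{01},u_{10}\}$). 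The statement therefore only holds under the tacit reading---shared by the paper's proof---that at fixed $\lambda$ the conditional $\mu(Y|X,\lambda)$ does not depend on $X$.
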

\begin{proof}
  By construction, the set of joint input distributions is
	\begin{equation}\label{eq:2SVpoly1}
		\mathcal{SV}^2_{\delta}=\{\mu(X,Y|\lambda):\mu(X,Y|\lambda)=\mu(X|\lambda)\mu(Y|X,\lambda), \mu(X|\lambda)\in \mathcal{SV}_{\delta},\mu(Y|x,\lambda)\in \mathcal{SV}_{\delta},\forall x\}\,,
	\end{equation}
    where $\mu(Y|x,\lambda)\in \mathcal{SV}_{\delta}$ following $\eqref{eq:1SVpolytopeB}$.
    
    Recall that a point $\mu$ belongs to a polytope $\mathcal{P}$ if and only if it can be written as a convex combination of its vertices $v_i$, 
	\begin{equation}
		\mu=\sum_i\omega(i)v_i\,,
	\end{equation}
	where $\omega\geq 0$ and $\sum_i\omega(i)=1$.
	Then we can express \eqref{eq:2SVpoly1} as 
	\begin{equation}
			\begin{split}
				\mathcal{SV}^2_{\delta}=\{\mu(X,Y|\lambda):&\mu(X,Y|\lambda)=\sum_{k,k'}\omega(k)\omega'(k')v_k(X)v'_{k'}(Y)\,, \\  &\omega,\omega'\geq 0\,, \sum_k\omega(k)=\sum_{k'}\omega'(k')=1\}\,.
		\end{split}
	\end{equation}
Since $\omega(k)\omega'(k')\geq 0$ and $\sum_{k,k'}\omega(k)\omega(k')=1$, we have that $\mathcal{SV}^2_{\delta}$ is a convex polytope with vertices $v_k(X)v'_{k'}(Y)$.
\end{proof}

As before, we approximate the set of conditional behaviours $\qs$ by a polytope $\mathcal{P}_Q$. This implies that the set $\mathcal{P}_\phyM$ of allowed behaviours is also a polytope with vertices $u_{ij}(CZ)=v_i(C|Z)v'_j(Z)$, where $\{v_i(C|Z)\}_i=\textrm{Extr}(\poly_Q)$ and $\{v'_j(Z)\}_j=\textrm{Extr}(\mathcal{SV}^2_{\delta})$. 
This result is analogous to the previous theorem and its proof is based on the decomposition $\mu(CZ|e)=\mu(C|Ze)\mu(Z|e)$.

We can then replace the constraint \eqref{eq:const2PEF} by the condition\footnote{We omitted conditioning on $e$ for simplicity.}
\begin{equation}\label{eq:const2PEFnew}
	\sum_{cz}\mu(c|z)\mu(z)F(c,z)\mu(d|z)^\beta\leq 1\,,\quad \forall\left(\mu(C|Z) \in\textrm{Extr}(\mathcal{P}_Q)\wedge  \mu(Z) \in\textrm{Extr}(\mathcal{SV}^2_{\delta})\right)\,,
\end{equation}
which is analogous to the last constraint in \eqref{eq:optPEFcond}, except that now it unfolds as a condition for each vertex of $\poly_Q$ and $\mathcal{SV}^2_{\delta}$.

\subsubsection{Noisy Hardy correlations}
In the scenario where inputs are correlated to the device, Bell inequalities are replaced by measurement dependent locality (MDL) inequalities, introduced in \cite{putz2016measurement}, as tests of non-locality. A protocol for randomness amplification will then require correlations that violate some MDL inequality. In~\cite{ramanathan2018practical}, the authors use a bipartite Hardy paradox \cite{HardyParadox} together with SV-sources \eqref{eq:1SVpolytope} to select the inputs for each party, and build the following MDL inequality on distributions $p(A,B,X,Y)$:
\begin{equation} \label{eq:MDL_ineq}
		h=p(0000)\left(\frac{1}{2}-\delta\right)^2 - \left( p(0101)+p(1010)+p(0011)\right)\left(\frac{1}{2}+\delta\right)^2 \leq 0\,.
\end{equation}
Assuming a non-signalling adversary classically correlated with the input source and the device, the amount of violation $h$ is then related to an upper bound on $p(A,B|X,Y)$, from which a lower bound on the min-entropy of outputs $AB$ is established, and consequently the extractable entropy\footnote{The extractable entropy is actually not computed in \cite{ramanathan2018practical}, where the chosen figure of merit is the number of $\varepsilon$-perfect random bits at the end of the protocol, meaning after randomness extraction. In Appendix \ref{app:boundsHminproofs}, we use their approach to obtain an expression for the extractable entropy that we can use to compare with our results.} after $n$ uses of the device.  Here we will lower bound this entropy by directly using our approach on noisy Hardy correlations, i.e., mixtures of behaviours that maximise the violation $h$ and white noise, 
\begin{equation}\label{eq:noisyHardy}
	p_{\text{Hardy}, w}(A,B,X,Y)=(1-w) p^*_{\text{Hardy}}(A,B,X,Y)+ w\frac{1}{16}\,.
\end{equation}
where we assumed $p(X,Y)=1/4$. The ideal correlations $p^{*}_{\rm Hardy}(A,B,X,Y)$ are obtained by measuring the two-qubit state $\ket{\phi_{\theta}}=\frac{1}{\sqrt{1+\cos^2\theta}}\left(\cos\theta\left(\ket{01}+\ket{10}\right) + \sin\theta\ket{11}\right)$ in the basis $\{\sin\theta\ket{0} - \cos\theta\ket{1}, \cos\theta\ket{0} + \sin\theta\ket{1}\}$ when $X,Y=0$ and $\{\ket{0}, \ket{1}\}$ if $X,Y=1$,  for $\theta=\arccos\sqrt{\frac{\sqrt{5}-1}{2}}$. Figure~\ref{fig:RA_Qproof} shows the extractable entropy rates obtained from $p_{\text{Hardy}, w}(A,B,X,Y)$ for different white noise and SV parameters with security parameter $\varepsilon=2^{-32}$.

\begin{figure*}[t]
  \centering
  \begin{subfigure}[t]{0.49\linewidth}
      \centering
      \includegraphics[width=\linewidth]{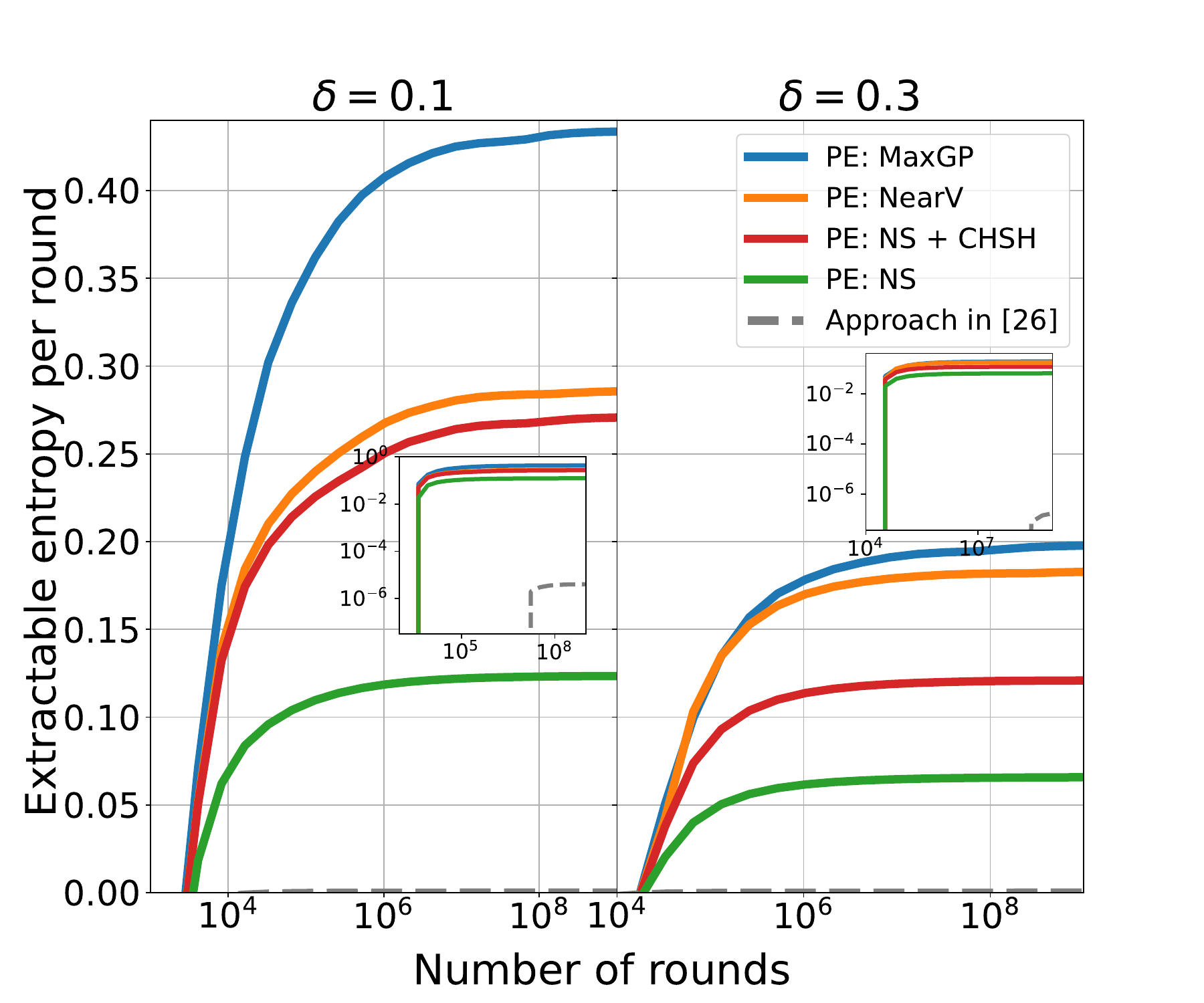}
      \caption{0.1\% white noise}
      \label{fig:RA_Qproof_bias_d1}
  \end{subfigure}
  \hfill
  \begin{subfigure}[t]{0.49\linewidth}
      \centering
      \includegraphics[width=\linewidth]{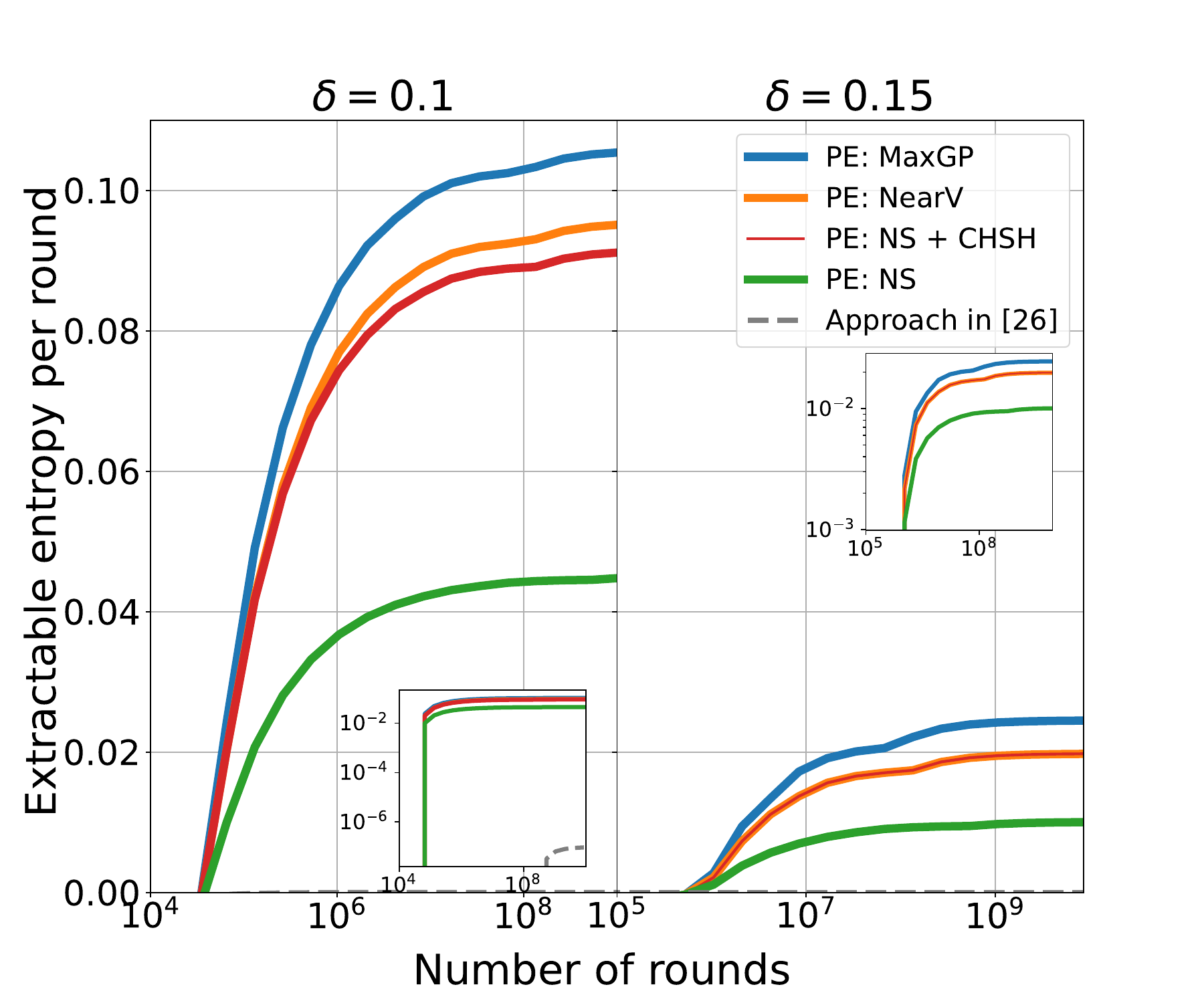}
      \caption{5\% white noise}
      \label{fig:RA_Qproof_bias_d3}
  \end{subfigure}
  \caption{\textbf{Randomness amplification of a SV-source.} Extractable entropy rates from noisy Hardy correlations \eqref{eq:noisyHardy} for different number of rounds: (a)  0.1\% white noise with SV parameter $\delta=0.1$ and $\delta=0.3$; and (b) 5\% white noise with SV parameter $\delta=0.1$ and $\delta=0.15$.  Lines corresponding to lower bounds computed with the PEF approach using $\poly_Q$ obtained through: (orange) 3 iterations of NearV (Algorithm~\ref{alg:closestEP} with $m=15$); (blue) 3 iterations of MaxGP (Algorithm~\ref{alg:gp}); (green) $\poly_Q=\ns$; (red) $\poly_Q=\ns \cap \textrm{CHSH} \leq 2\sqrt 2$.  The grey dashed line represents the extractable entropy using the approach in \cite{ramanathan2018practical} against NS adversaries (see Appendix~\ref{app:extractEnt_other}).
  The figure is produced using level 2 of the NPA hierarchy.}
  \label{fig:RA_Qproof}
\end{figure*}

\section{Discussion and conclusion}\label{sec:conclusion}

Our approach yields a significant improvement in the rates of extractable entropy across all scenarios we have examined. Additionally, when compared to analyses based on the original EAT \cite{Dupuis:2020EAT} or the Azuma-Hoeffding inequality as in \cite{pironio2013security}, our method requires far fewer device uses to achieve positive rates.

As discussed, a key advantage of PE is its better penalty factor---scaling as $O(\frac{1}{n})$, versus $O(\frac{1}{\sqrt{n}})$ for other approaches---compensating for suboptimal single round bounds. Another interesting observation is the distinct asymptotic behaviour of the three approaches and how this impacts their performance. In the PE approach, extractable entropy rates converge to the conditional Shannon entropy. In contrast, the method in \cite{pironio2013security} yields convergence to the conditional min-entropy, which is a lower bound on the conditional Shannon entropy. This difference contributes to the superior performance of PE-based methods unless the polytope approximation used is too coarse (as is the case in the tripartite scenario analysed).
On the other hand, entropy extracted using the EAT converges to the conditional von Neumann entropy, which is generally lower than the conditional Shannon entropy. However, in the large $n$ regime and for correlations with low noise levels, the EAT approach sometimes outperforms the PE method. We believe this is primarily due to a smaller set of allowed behaviours (defined directly by the NPA hierarchy) being considered in the EAT framework.

Regarding the scalability of our approach, it is important to highlight that both algorithms fundamentally rely on the ability to perform vertex enumeration---a computationally intensive task that constitutes a clear bottleneck. For example, computing the vertices of the non-signalling polytope $\ns$ in Bell scenarios involving more than three parties becomes infeasible with standard computational resources.  As a result, applying our methods to more complex scenarios would require initialising the algorithms with significantly coarser approximations of the quantum set---such as polytopes that may include signalling behaviours. This possibility could be explored if these more complex scenarios become relevant at the experimental level. 

Going beyond the framework analysed here, we could immediately use our construction to prove security against quantum side information, in specific scenarios, by following the method proposed in \cite{zhangQPE}. Further, it would be interesting to investigate how our approach can be used for the security analysis of semi-DI-QRNG/QKD protocols where the set of allowed behaviours can be approximated by semidefinite constraints; refer to \cite{Tavakoli_SDP_review} for relevant examples of such protocols.

\paragraph{Acknowledgements}
The authors thank Cameron Foreman, Erik Woodhead and Victoria Wright for helpful discussions and feedback on an earlier version of the manuscript.

\bibliographystyle{quantum}
\bibliography{mybibliography}

\newpage
\appendix
\appendixpage

\section{Bounds on smooth min-entropies}
\label{app:boundsHminproofs}
	
	A useful tool to bound the total variation distance between two probability distributions of $X$ is the notion of \emph{sub-probability distribution $\tilde \nu(X)$}. This is a sub-normalised probability distribution, i.e $\tilde \nu(x)\geq 0, \forall x$, and the weight of $\tilde{\nu}$ satisfies $\omega(\tilde{\nu}):=\sum_x \tilde{\nu}(x)\leq 1$. 
	The following lemma is the generalisation of Lemma 1 in \cite{knill2020generation}.
	
	\begin{lemma}\label{th:subprobTV}
		Let $\nu$ and $\nu'$ be probability distributions of $X$. Then $\|\nu-\nu'\|\leq \epsilon$  if and only if there exists a sub-probability distribution $\tilde\nu$ of $X$, with weight $\omega\geq 1-\epsilon$, satisfying $\tilde{\nu}\leq\nu$ and $\tilde{\nu}\leq\nu'$.\footnote{We use the notation $\nu\leq\nu'$ when $\nu(x)\leq\nu'(x)$ for all $x$.}
	\end{lemma}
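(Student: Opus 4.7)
The plan is to prove the equivalence by constructing the sub-probability distribution $\tilde\nu$ explicitly via the pointwise minimum, and relating its total weight to the total variation distance between $\nu$ and $\nu'$.

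First, for the easier direction ($\Leftarrow$), I would assume a sub-probability distribution $\tilde\nu$ with weight $\omega \geq 1-\epsilon$ satisfying $\tilde\nu \leq \nu$ and $\tilde\nu \leq \nu'$ exists, and bound $\|\nu-\nu'\|$ directly. The key observation is that for each $x$, the pointwise inequality $|\nu(x)-\nu'(x)| \leq (\nu(x)-\tilde\nu(x)) + (\nu'(x)-\tilde\nu(x))$ holds, since both $\nu-\tilde\nu$ and $\nu'-\tilde\nu$ are non-negative. Summing over $x$ yields $2\|\nu-\nu'\| \leq 2 - 2\omega \leq 2\epsilon$, giving the desired bound.

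For the reverse direction ($\Rightarrow$), I would define $\tilde\nu(x) := \min(\nu(x),\nu'(x))$. Clearly $\tilde\nu \geq 0$, and the two inequalities $\tilde\nu \leq \nu$ and $\tilde\nu \leq \nu'$ hold by construction. The only thing left to show is that the weight satisfies $\omega := \sum_x \tilde\nu(x) \geq 1 - \epsilon$. Using the elementary identity $\min(a,b) = \tfrac{1}{2}(a+b-|a-b|)$, I compute
\begin{equation*}
\omega = \sum_x \min(\nu(x),\nu'(x)) = \frac{1}{2}\left(\sum_x \nu(x) + \sum_x \nu'(x) - \sum_x |\nu(x)-\nu'(x)|\right) = 1 - \|\nu - \nu'\|,
\end{equation*}
which is at least $1-\epsilon$ by assumption. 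In particular, $\omega \leq 1$ so $\tilde\nu$ is a legitimate sub-probability distribution.

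There is no real obstacle here; the whole argument rests on the elementary identity relating the sum of pointwise minima to the total variation distance, and on recognising that the pointwise minimum is the canonical optimal choice of $\tilde\nu$. The only subtlety worth highlighting is that the construction in the forward direction is essentially tight, so the equivalence is genuinely an ``if and only if'' rather than a one-sided bound---this matches the intuition that total variation distance quantifies exactly the mass that cannot be ``shared'' between two distributions.
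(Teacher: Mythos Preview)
Your proof is correct and follows essentially the same approach as the paper: both directions use the pointwise minimum $\tilde\nu=\min(\nu,\nu')$ and the identity $\sum_x\min(\nu(x),\nu'(x))=1-\|\nu-\nu'\|$, with only cosmetic differences in how this identity is derived (you use $\min(a,b)=\tfrac12(a+b-|a-b|)$, the paper unwinds it via indicator functions).
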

	\begin{proof}
		If there is $\tilde \nu$ such that $\tilde{\nu}\leq\nu$ and $\tilde{\nu}\leq\nu'$, with weight $\omega\geq 1-\epsilon$, we can directly check that $\|\nu-\nu'\|\leq \epsilon$. Start by expressing the total variation as 
		\begin{equation}
			\|\nu-\nu'\|=\sum_x\left(\nu(x)-\nu'(x)\right)\Theta(\nu(x)-\nu'(x))\,,
		\end{equation}
		where $\Theta(y)=1$ if $y\geq 0$ and $\Theta(y)=0$ otherwise. Then,
		\begin{equation}
			\begin{split}
				\|\nu-\nu'\|&=\sum_x\left(\nu(x)-\nu'(x)\right)\Theta(\nu(x)-\nu'(x)) \\  & \leq \sum_x\left(\nu(x)-\tilde \nu(x)\right)\leq 1-\omega\leq \epsilon\,.
			\end{split}
		\end{equation}
		
		In the other direction, if $\|\nu-\nu'\|\leq \epsilon$, we can define a sub-probability distribution as $\tilde{\nu}(x):=\min\{\nu(x),\nu'(x)\}$. Then $\tilde{\nu}\leq\nu$ and $\tilde{\nu}\leq\nu'$ as required. We are left with proving that the weight of $\tilde \nu$ is $\omega\geq 1-\epsilon$, which follows from:
		\begin{equation}
			\begin{split}
				\omega=&\sum_x\min\{\nu(x),\nu'(x)\}\\
				=&\sum_x\nu(x)\Theta\left(\nu'(x)-\nu(x)\right)+\sum_x\nu'(x)\Theta\left(\nu(x)-\nu'(x)\right)\\
				=&\sum_x\nu(x)\Theta\left(\nu'(x)-\nu(x)\right)+\left(1-\sum_x\nu'(x)\Theta\left(\nu'(x)-\nu(x)\right)\right)\\
				=&1-\|\nu-\nu'\|\geq 1-\epsilon\,.
			\end{split}
		\end{equation}
		In the third line we have used the identity
		\begin{equation} 
			1=\sum_x\nu'(x)=\sum_x\nu'(x)\left[\Theta\left(\nu'(x)-\nu(x)\right)+\Theta\left(\nu(x)-\nu'(x)\right)\right]\,.
		\end{equation}
	\end{proof}
	
	We can immediately use this result to prove the following relation between high-confidence bounds on the min-entropy and the smooth min-entropy. We are using here generic variables $C$, $D$ and $Z$, while in the main text we have explicitly represented them as sequences.
	
	\begin{theorem}[Theorem \ref{th:confbound2minent} in the main text]
    \label{th:confbound2Hmin_app}
		If, for all $e$, $\prob{  H_{\min}(D|Z,e)\geq h}\geq 1-\epsilon$ then $H_{\min}^{\epsilon}(D|Z,E)\geq h$.
	\end{theorem}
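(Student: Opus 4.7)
The plan is to explicitly construct a probability distribution $\nu$ satisfying the two conditions of Definition~\ref{def:smoothHmin}, and then invoke Lemma~\ref{th:subprobTV} to control the total variation distance via a natural sub-probability distribution built from the hypothesis.

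First I would unpack the hypothesis. Reading $\prob{H_{\min}(D|Z,e) \geq h} \geq 1-\epsilon$ as a statement about $Z$ drawn from $\mu(\cdot|e)$, define the ``good set''
\[
  G_e := \{ z : H_{\min,\mu}(D|z,e) \geq h \} = \{ z : \max_d \mu(d|z,e) \leq 2^{-h} \},
\]
so that the hypothesis becomes $\mu(G_e \mid e) \geq 1-\epsilon$ for every $e$. On $G_e$ the conditional $\mu(D|z,e)$ already meets the min-entropy target, so any smoothing needs to act only on the complement $B_e := Z \setminus G_e$.

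Next, I would define $\nu$ by keeping the marginal $\nu(z,e) := \mu(z,e)$ and setting the conditional to $\nu(d|z,e) := \mu(d|z,e)$ for $z \in G_e$ and $\nu(d|z,e) := u(d)$ for $z \in B_e$, where $u$ is any fixed distribution on the range of $D$ with $H_{\min,u}(D) \geq h$. The uniform distribution works whenever $h \leq \log|D|$; the degenerate case $h > \log|D|$ forces $\mu(G_e|e)=0$ and hence $\epsilon \geq 1$, making the conclusion trivial. By construction $H_{\min,\nu}(D|z,e) \geq h$ for every $(z,e)$, which is exactly condition~2 of Definition~\ref{def:smoothHmin}.

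For the total variation bound I would introduce the sub-probability distribution $\tilde\nu(d,z,e) := \mu(d,z,e)\,\mathbf{1}[z \in G_e]$, whose weight is $\omega(\tilde\nu) = \sum_e \mu(e)\,\mu(G_e|e) \geq 1-\epsilon$. By construction $\tilde\nu \leq \mu$ (terms are only zeroed out) and $\tilde\nu \leq \nu$ (the two agree on $G_e$ and $\tilde\nu$ vanishes on $B_e$). Applying Lemma~\ref{th:subprobTV} then yields $\|\mu - \nu\| \leq 1 - \omega(\tilde\nu) \leq \epsilon$, delivering condition~1 and completing the argument. I do not anticipate a real obstacle: the construction is essentially forced once one writes out the definitions, and the lemma does the heavy lifting. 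The only points needing care are the edge case $h > \log|D|$ and the harmless convention for $\mu(D|z,e)$ at $(z,e)$ with $\mu(z,e)=0$, both easy to dispatch.
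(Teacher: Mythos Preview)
Your proof is correct and follows essentially the same approach as the paper: you construct the same sub-probability $\tilde\nu$ (the paper writes it as $\mu(D,Z,E)\Theta(2^{-h}-\max_{d'}\mu(d'|Z,E))$, which is your indicator on $G_e$), the same smoothed distribution $\nu$ (uniform on the bad set, unchanged elsewhere), and invoke Lemma~\ref{th:subprobTV} identically. Your explicit treatment of the edge case $h>\log|D|$ is a small bonus; the paper simply asserts $2^{-h}\geq 1/|D|$ without comment.
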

	\begin{proof}
		Assume that $DZE$ is distributed according to $\mu$. Define the sub-distribution $\tilde \nu$,
		\begin{equation}
			\tilde \nu(D,Z,E)=\mu(D,Z,E)\Theta\left(2^{-h}-\max_{d'}\mu(d'|Z,E)\right)\,,
		\end{equation}
		where $\Theta(x)=1$ if $x\geq 0$ and $\Theta(x)=0$ otherwise. The weight of $\tilde \nu$ is
		\begin{equation}
			\begin{aligned}
				\omega&=\sum_{d,z,e}\mu(z,e)\mu(d|z,e)\Theta\left(2^{-h}-\max_{d'}\mu(d'|z,e)\right)\\
                &=\sum_{z,e}\mu(z,e)\Theta\left(2^{-h}-\max_{d'}\mu(d'|z,e)\right)\geq 1-\epsilon\,.
			\end{aligned}
		\end{equation}
		Now define $\nu$ as
		\begin{equation}
			\nu(D,Z,E) = 
			\begin{cases}
				\mu(D,Z,E) & \text{if } \max_{d'}\mu(d'|Z,E)\leq 2^{-h}\\
				\frac{1}{|D|}{\mu(Z,E)} & \text{otherwise.}
			\end{cases}
		\end{equation}
		Since $2^{-h} \geq 1/|D|$
        it follows that $\max_d\nu(d|z,e)\leq 2^{-h}$, i.e. $H_{\min ,\nu}(D|z,e)\geq h,\forall z,e$. Clearly $\tilde \nu\leq \mu$ and $\tilde \nu \leq \nu$, and by Lemma \ref{th:subprobTV}, $\|\mu-\nu\|\leq \epsilon$. Then  $H_{\min,\mu}^{\epsilon}(D|Z,E)\geq h$.
	\end{proof}
	
	The following average notion of conditional smooth min-entropy  comes in handy for some of the proofs in this paper.
	\begin{definition}\label{def:smoothAvgHmin}
		Let random variables $DZE$ be jointly distributed according to $\mu$.  Then $D$ has \emph{$\epsilon$-smooth average ZE-conditional min-entropy} $H_{\min,\mu}^{\epsilon,\text{\emph{avg}}}(D|Z,E)\geq h$ if there exists a distribution $\nu$ such that
		\begin{enumerate}
			\item $\|\mu-\nu\|\leq \epsilon$,
			\item $\mathbb{E}(\max_d\nu(d|ZE))\leq 2^{-h}$.
		\end{enumerate}
	\end{definition}
	
	Moreover, it is convenient to find a mapping between the average conditional min-entropy and the (worst-case) version in Definition~\ref{def:smoothHmin}. While the latter clearly provides a lower bound on the average version, i.e., if $H_{\min,\mu}^{\epsilon}(D|Z,E)\geq h$ then $H_{\min,\mu}^{\epsilon,\text{avg}}(D|Z,E)\geq h$, a relationship in the opposite direction is not as straightforward. The following theorem establishes that result and is adapted from Lemma 5 in~\cite{knill2020generation} to our (more general) definition of conditional smooth min-entropy in Definition~\ref{def:smoothHmin}.
	
	\begin{lemma}\label{th:avg2worstsmoothH}
		Consider positive constants $p>0$ and $\delta>0$. If $H_{\min,\mu}^{\epsilon,\text{\emph{avg}}}(D|Z,E)\geq -\log(p\delta)$, then $H_{\min,\mu}^{\epsilon+\delta}(D|Z,E)\geq -\log p$.
	\end{lemma}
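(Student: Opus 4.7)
The plan is to start from the hypothesis by unpacking the average smooth min-entropy definition, then apply Markov's inequality to identify a ``bad'' subset of the conditioning values on which the max conditional probability is too large, and finally modify the witnessing distribution on that bad set to get a worst-case bound, paying only a small additional total variation cost.

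Concretely, I would first invoke Definition~\ref{def:smoothAvgHmin} to obtain a distribution $\nu$ on the same domain as $\mu$ such that $\|\mu-\nu\|\leq \epsilon$ and $\mathbb{E}_\nu\bigl(\max_d \nu(d|Z,E)\bigr)\leq p\delta$. Next I would define the bad set $B=\{(z,e):\max_d \nu(d|z,e)>p\}$ and apply Markov's inequality to the nonnegative random variable $\max_d \nu(d|Z,E)$:
\begin{equation*}
p\cdot \Pr{}_{\!\nu}\bigl[(Z,E)\in B\bigr]\leq \mathbb{E}_\nu\bigl(\max_d \nu(d|Z,E)\bigr)\leq p\delta\,,
\end{equation*}
so $\nu(B)\leq \delta$.

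I would then construct a modified distribution $\nu'$ that coincides with $\nu$ outside $B$ and is replaced by the uniform conditional on $D$ inside $B$:
\begin{equation*}
\nu'(d,z,e)=\begin{cases}\nu(d,z,e) & \text{if }(z,e)\notin B,\\ \nu(z,e)/|D| & \text{if }(z,e)\in B.\end{cases}
\end{equation*}
By construction $\max_d \nu'(d|z,e)\leq p$ for every $(z,e)$ (assuming $p\geq 1/|D|$, which is harmless since otherwise the conclusion $H_{\min}^{\epsilon+\delta}\geq -\log p>\log|D|$ is vacuous), so $H_{\min,\nu'}(D|z,e)\geq -\log p$ uniformly. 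For the distance, since $\nu$ and $\nu'$ agree outside $B$ and each of the $|D|$-valued conditional distributions on $B$ differ by at most $2$ in $\ell_1$, a short computation gives
\begin{equation*}
\|\nu-\nu'\|=\tfrac12\sum_{(z,e)\in B}\nu(z,e)\sum_d\bigl|\nu(d|z,e)-1/|D|\bigr|\leq \nu(B)\leq \delta\,.
\end{equation*}
The triangle inequality then yields $\|\mu-\nu'\|\leq \epsilon+\delta$, so $\nu'$ witnesses $H_{\min,\mu}^{\epsilon+\delta}(D|Z,E)\geq -\log p$ via Definition~\ref{def:smoothHmin}.

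I do not expect any serious obstacle here: the argument is essentially a Markov-plus-truncation step. The only subtlety is making sure the uniformization on $B$ does not overshoot, which is handled by the $p\geq 1/|D|$ remark, and being careful that the modification preserves the marginal $\nu(z,e)$ so that the TV cost is controlled by $\nu(B)$ rather than growing with $|D|$.
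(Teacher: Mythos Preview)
Your proposal is correct and follows essentially the same approach as the paper's proof: both obtain the witness $\nu$, apply Markov's inequality to bound the mass of the bad set $\{(z,e):\max_d\nu(d|z,e)>p\}$ by $\delta$, uniformize $\nu$ on that set, and finish with the triangle inequality. The only cosmetic difference is that the paper bounds $\|\nu-\nu'\|\leq\delta$ via the sub-probability Lemma~\ref{th:subprobTV}, whereas you compute it directly; your handling of the $p\geq 1/|D|$ edge case is also the same as the paper's (which simply asserts it).
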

	\begin{proof}
		By definition, there exists a distribution $\nu$ of $DZE$ such that $\mathbb{E}(\max_d\nu(d|Z,E))\leq p\delta$ and $\|\mu-\nu\|\leq\epsilon$. Consider the random variable $M(Z,E)=\max_d\nu(d|Z,E)$. Applying the Markov inequality, we get
		\begin{equation}\label{eq:badevents}
			\Pr(M\geq p)\leq \delta.
		\end{equation}
		
		Define 
		\begin{equation}
			\eta(D,Z,E) = 
			\begin{cases}
				\nu(D,Z,E) & \text{if } \max_{d'}\nu(d'|Z,E)\leq p\\
				\frac{1}{|D|}{\nu(Z,E)}, & \text{otherwise.}
			\end{cases}
		\end{equation}
		
		We will now prove that there exists a sub-distribution $\tilde \nu$ with weight $\omega\geq 1-\delta$ such that $\tilde \nu\leq \nu$ and $\tilde \nu\leq \eta$. Define $\tilde{\nu}(D,Z,E)=\nu(D,Z,E)\Theta(p-\max_{d'}\nu(d'|Z,E))$. It is clear that $\tilde \nu\leq \nu$ and $\tilde \nu\leq \eta$ follows from the fact that $p\geq 1/|D|$. Furthermore, the weight of $\tilde{\nu}$ is
		\begin{equation}
			\begin{aligned}
				\omega=\sum_{d,z,e}\tilde\nu(d,z,e)=&\sum_{d,z,e}\nu(d|z,e)\nu(z,e)\Theta\left(p-\max_{d'}\nu(d'|z,e)\right) \\ = &\sum_{z,e}\nu(z,e)\Theta\left(p-\max_{d'} \nu(d'|z,e)\right)\geq 1-\delta\,,
			\end{aligned}
		\end{equation}
	where the last equality comes from the fact that $\sum_d \nu(d|ze)=1$ while in the final inequality we used the bound~\eqref{eq:badevents}. Then, from Lemma~\ref{th:subprobTV}, $\|\nu-\eta\|\leq \delta$. Using the triangle inequality, we get 
		\begin{equation}
			\|\mu-\eta\|\leq \|\mu-\nu\|+\|\nu-\eta\|\leq \epsilon+\delta\,.
		\end{equation}
		Since $\eta(d|z,e)\leq p, \forall d,z,e$, we finally conclude that $H_{\min,\mu}^{\epsilon+\delta}(D|Z,E)\geq -\log p$.
	\end{proof}
	
	\begin{theorem}[Theorem \ref{thm:extractable_E} in the main text]
		If, for all $e$, $\prob{H_{\min}(D|Z,e)\geq W}\geq 1-\kappa$ and $p_{Acc}=\Pr(W\geq t),$ then the \emph{extractable entropy of $D$} is lower-bounded by
		\begin{equation}
        \label{eq:app:extractable_entropy}
			H_{\min}^{\varepsilon/p_{Acc}}(D|Z,E,\verb|Accept|)\geq t +\frac{\log \kappa}{\beta}+\log(\varepsilon-\kappa) \,.
		\end{equation}
	\end{theorem}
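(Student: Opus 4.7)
The plan is to combine the hypothesis with two standard moves: conditioning on Accept, which rescales the failure probability by $1/p_{Acc}$, and the average-to-worst-case conversion of smooth min-entropy supplied by Lemma~\ref{th:avg2worstsmoothH}. The hypothesis gives a high-probability pointwise lower bound on $H_{\min}(D|Z,e)$, and the aim is to turn this into a smooth min-entropy bound on the post-Accept distribution.

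First I would intersect the hypothesis event with Accept. Since Accept $=\{W\geq t\}$, on the joint event we have $H_{\min}(D|Z,e)\geq t$, with an extra $\log\epsilon/\beta$ slack entering through the PEF-form of $W=\log T_n/\beta$ and the Markov step that produced the confidence bound \eqref{eq:confboundp}. Taking probabilities conditional on $e$, intersecting with Accept, and dividing by $p_{Acc}$ yields a pointwise min-entropy bound holding with probability $\geq 1-\epsilon/p_{Acc}$ in the post-Accept distribution $\mu^{Acc}$.

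Next I would convert this probabilistic bound into an \emph{average} smooth min-entropy statement by replicating the sub-distribution construction from the proof of Theorem~\ref{th:confbound2Hmin_app}: multiplying $\mu^{Acc}$ by the appropriate $\Theta$-indicator yields a sub-distribution of weight $\geq 1-\epsilon/p_{Acc}$ whose support enforces $\max_d \mu(d|Z,e)\leq 2^{-(t+\log\epsilon/\beta)}$, and Lemma~\ref{th:subprobTV} then delivers an average smooth min-entropy bound at smoothness $\epsilon/p_{Acc}$ with value $t+\log\epsilon/\beta$. I would finish by applying Lemma~\ref{th:avg2worstsmoothH} with $\delta=\varepsilon-\epsilon$, trading this average bound for the worst-case smooth min-entropy at smoothness $\varepsilon/p_{Acc}$ and paying the additive penalty $\log(\varepsilon-\epsilon)$, which gives exactly the stated bound.

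The main obstacle is the bookkeeping: three parameters --- $\epsilon$ from the hypothesis, $\varepsilon$ from the target smoothness, and $p_{Acc}$ from the conditioning --- interact across the chain of implications, and the free choices in Lemma~\ref{th:avg2worstsmoothH} (the $p$ and $\delta$) must be tuned so that the $p_{Acc}$ factors cancel out of the entropy expression, leaving only $\log(\varepsilon-\epsilon)$ rather than $\log((\varepsilon-\epsilon)/p_{Acc})$. Conceptually every ingredient is already present (Lemma~\ref{th:subprobTV}, Theorem~\ref{th:confbound2Hmin_app}, Lemma~\ref{th:avg2worstsmoothH}); the work lies in sequencing the arguments and tracking the parameters carefully.
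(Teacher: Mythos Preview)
Your overall strategy matches the paper's: condition on \verb|Accept|, pass to an average smooth min-entropy statement, then apply Lemma~\ref{th:avg2worstsmoothH}. The paper does exactly this, invoking Lemma~18 of \cite{knill2020generation} for the intermediate step, which yields
\[
H_{\min}^{\epsilon/p_{Acc},\text{avg}}(D|Z,E,\verb|Accept|)\;\geq\; t+\frac{\log\epsilon}{\beta}+\log(p_{Acc})\,,
\]
and then applies Lemma~\ref{th:avg2worstsmoothH} with $\delta=(\varepsilon-\epsilon)/p_{Acc}$.

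Your sketch, however, has a real gap in the conditioning step. You write that dividing by $p_{Acc}$ ``yields a pointwise min-entropy bound holding with probability $\geq 1-\epsilon/p_{Acc}$ in the post-Accept distribution $\mu^{Acc}$'', and from this you claim an average smooth min-entropy of $t+\log\epsilon/\beta$ at smoothness $\epsilon/p_{Acc}$. But the hypothesis controls $\mu(D|Z,e)$, the \emph{unconditioned} distribution; after conditioning you are now working with $\mu^{Acc}(D|Z,e)=\mu(D|Z,e,\verb|Accept|)$, and these can differ by a factor of order $1/p_{Acc}$. The sub-distribution you construct enforces $\mu(d|z,e)\leq 2^{-(t+\log\epsilon/\beta)}$ on its support, which says nothing directly about $\mu^{Acc}(d|z,e)$ and therefore does not yield the claimed bound on $\mathbb{E}(\max_d\nu(d|ZE))$ for a $\nu$ close to $\mu^{Acc}$. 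This is precisely where the additional $+\log(p_{Acc})$ term enters, and it is the nontrivial content of the external lemma the paper cites.

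This missing term is not cosmetic. With your claimed average bound and the correct smoothness-matching choice $\delta=(\varepsilon-\epsilon)/p_{Acc}$, Lemma~\ref{th:avg2worstsmoothH} would output $t+\log\epsilon/\beta+\log(\varepsilon-\epsilon)-\log p_{Acc}$, strictly \emph{stronger} than the theorem --- a clear sign the intermediate bound is overclaimed. Conversely, your stated choice $\delta=\varepsilon-\epsilon$ gives final smoothness $\epsilon/p_{Acc}+(\varepsilon-\epsilon)\neq\varepsilon/p_{Acc}$. The ``cancellation'' you anticipate does occur, but it is the $+\log(p_{Acc})$ penalty in the average bound cancelling against $\log\delta=\log((\varepsilon-\epsilon)/p_{Acc})$; it is not something that can be arranged by tuning $p$ and $\delta$ alone.
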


	\begin{proof}
		The proof follows from Lemma 18 in \cite{knill2020generation}, which states that under these conditions the average conditional smooth min-entropy satisfies
		\begin{equation}
			H_{\min}^{\kappa/p_{Acc},\text{avg}}\left(D|Z,E,W\geq t\right)\geq t +\frac{\log \kappa}{\beta}+\log(p_{Acc}).
		\end{equation}
		Then, 
		\begin{equation}
			H_{\min}^{\kappa/p_{Acc},\text{avg}}\left(D|Z,E,W\geq t\right)\geq t +\frac{\log \kappa}{\beta}+\log(\varepsilon-\kappa)-\log\left(\frac{\varepsilon-\kappa}{p_{Acc}}\right)
		\end{equation}
		to which we apply Lemma \ref{th:avg2worstsmoothH} and finally obtain \eqref{eq:app:extractable_entropy}.
	\end{proof}

	\section{PEFs and min-trade-off functions}
	\label{app:PEFmintradeoff}

	Min-trade-off functions were introduced in the entropy accumulation framework \cite{Dupuis:2020EAT} and provide a convenient tool to lower-bound single round entropies.  We revisit here a connection between PEFs and min-trade-off functions established in \cite{knill2020generation}, which will later allow us to compare the extractable entropies calculated using both methods. 
	
	Consider the \emph{average conditional Shannon entropy} defined as 
	\begin{equation}\label{eq:ShannonEnt}
		H(D|Z,E)=\sum_e \mu(e)H(D|Z,e) =-\sum_{dze}\mu(e)\mu(dz|e)\log \mu(d|ze)\,,
	\end{equation}
	and an affine function $f(\mu)=\sum_{cz}K(c,z)\mu(cz)$ which satisfies
	\begin{equation}\label{def:mintradeoff}
		f(\mu)\leq H(D|Z,E)\,,\qquad \forall \mu\in\phyM\,,
	\end{equation}
where $\phyM$ is the set of allowed behaviours and $f$ is a \emph{min-trade-off function} for $H(D|Z,E)$ . The function $f$ can be seen as the expected value of a \emph{min-trade-off estimator} $K=K(C,Z)$
	\begin{equation}
		\mathbb{E}(K(C,Z))=f(\mu)\,,
	\end{equation}
	where $\mu=\mu(C,Z)$.  As a result, the optimal $K$ for a given  behaviour $p$ is the solution of the optimisation  \cite{van2019correlations}, 
	\begin{equation}\label{eq:mintradeoffopt}
		\begin{aligned}
			\max_K\quad & \sum_{cz} p(cz) K(c,z)\\
			\textrm{s.t.} \quad &  \sum_{cz}\mu(c,z)K(c,z)\leq \sum_{dz}\mu(d,z)(-\log \mu(d|z))\,,\quad \forall \mu \in \phyM\,.\\
		\end{aligned}
	\end{equation} 
	There is clear a resemblance between \eqref{eq:mintradeoffopt} and the single round PEFs optimisation \eqref{eq:PEFoptnobeta}. In fact, we will see next that every PEF can be used to define a min-trade-off function \eqref{def:mintradeoff}.

	\begin{theorem}
		Consider a PEF $F(C,Z)$ with power $\beta$. Then $\frac{1}{\beta}\log F$ is a (suboptimal) min-trade-off estimator for the average conditional Shannon entropy $H(D|Z,e)$ \eqref{eq:ShannonEnt}.
	\end{theorem}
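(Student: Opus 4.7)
The plan is to derive the min-trade-off inequality directly from the defining PEF constraint using Jensen's inequality applied to the concave logarithm. Fix any value $e$ of the side information and any allowed behaviour $\mu(C,Z|e) \in \mathcal{S}$. By Definition~\ref{def:PEF}, the PEF constraint gives
\begin{equation*}
\mathbb{E}_\mu\left[F(C,Z)\,\mu(D|Z,e)^\beta\right] \leq 1,
\end{equation*}
where the expectation is over $\mu(C,Z|e)$ and $D$ is regarded as the deterministic function of $C$ prescribed by the protocol.

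Since $F \geq 0$ and $\mu(D|Z,e)^\beta \geq 0$, I would apply Jensen's inequality for the concave function $\log$ to the non-negative random variable $F(C,Z)\mu(D|Z,e)^\beta$, obtaining
\begin{equation*}
\mathbb{E}_\mu\left[\log F(C,Z) + \beta\log \mu(D|Z,e)\right] \;\leq\; \log \mathbb{E}_\mu\left[F(C,Z)\mu(D|Z,e)^\beta\right] \;\leq\; 0.
\end{equation*}
Dividing by $\beta>0$ and rearranging gives $\mathbb{E}_\mu\left[\tfrac{1}{\beta}\log F(C,Z)\right] \leq -\mathbb{E}_\mu[\log\mu(D|Z,e)]$. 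Because $D$ is a deterministic function of $C$, the marginal of $\mu(C,Z|e)$ onto $(D,Z)$ is just $\mu(D,Z|e)$, so the right-hand side collapses to the conditional Shannon entropy: $-\sum_{d,z}\mu(d,z|e)\log\mu(d|z,e) = H(D|Z,e)$. Hence for every $\mu \in \mathcal{S}$,
\begin{equation*}
f(\mu) := \mathbb{E}_\mu\left[\tfrac{1}{\beta}\log F(C,Z)\right] \leq H(D|Z,e),
\end{equation*}
which is precisely the min-trade-off property~\eqref{def:mintradeoff} with affine estimator $K=\tfrac{1}{\beta}\log F$.

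There is no real obstacle here: the argument is essentially a one-line application of Jensen's inequality, and the only point requiring a brief remark is handling potential zeros. If $F(c,z)=0$ at some $(c,z)$ with positive probability under $\mu$, then $\mathbb{E}_\mu[\log F]=-\infty$ and the inequality is trivial; meanwhile $\mu(D|Z,e)=0$ cannot occur on the support of $\mu(C,Z|e)$ because $D$ is a function of $C$, so the $\log\mu(D|Z,e)$ term is always finite on the support. Suboptimality relative to the optimum of~\eqref{eq:mintradeoffopt} is automatic, since the PEF constraint~\ref{def:PEFcons} is strictly stronger than the linear min-trade-off constraint in~\eqref{eq:mintradeoffopt} (it bounds an exponential moment rather than a linear one), so $\tfrac{1}{\beta}\log F$ is in general not the optimal choice, matching the claim in the statement.
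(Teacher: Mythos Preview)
Your proof is correct and follows essentially the same approach as the paper: both apply Jensen's inequality for the concave logarithm to the PEF constraint $\mathbb{E}[F\mu(D|Z)^\beta]\leq 1$, then rearrange to obtain $\mathbb{E}[\tfrac{1}{\beta}\log F]\leq H(D|Z)$. The paper additionally names the Jensen gap $\Delta\geq 0$ explicitly to highlight the source of suboptimality, whereas you argue suboptimality separately by noting the PEF constraint is strictly stronger than the linear min-trade-off constraint; both observations amount to the same thing.
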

	\begin{proof}
		Applying the log function to condition \ref{def:PEFcons} in the definition of PEF, we have 
		\begin{equation}\label{eq:PEFconstK}
			\log\left[\mathbb{E}\left(F(C,Z)\mu(D|Z)^\beta\right)\right]\leq 0\,,\qquad \forall \mu \in \phyM\,.
		\end{equation}
		Since $\log$ is a concave function, we can use Jensen's inequality to write \eqref{eq:PEFconstK} as 
		\begin{equation}
			\mathbb{E}\left[\log\left(F(C,Z)\mu(D|Z)^\beta\right)\right]+\Delta\leq 0\,,\qquad \forall \mu \in \phyM\,,
		\end{equation}
		where $\Delta:=\log \left[\mathbb{E}(F(C,Z)\mu(D|Z)^\beta)\right]- \mathbb{E}\left[ \log (F(C,Z)\mu(D|Z)^\beta)\right]\geq 0$.
		From this, we can finally prove that $\log F/\beta$ is a min-trade-off estimator for the conditional Shannon entropy.
		\begin{equation}\label{eq:mintradeconst}
			\mathbb{E}\left[\frac{\log F(C,Z)}{\beta}\right]+\frac{\Delta}{\beta}\leq \mathbb{E}\left[-\log \mu(D|Z)\right] = H(D|Z)\,,\quad \forall \mu\in \phyM\,.
		\end{equation} 
	\end{proof}
	Note that due to the $\Delta/\beta$ gap, this estimator will often be suboptimal. In \cite{knill2020generation}, it is proven that in the asymptotic limit of very large $n$, $\beta\rightarrow 0$ and $\frac{1}{\beta}\log F$ tends to an optimal estimator of the average conditional Shannon entropy.

\section{Extractable entropy from alternative methods}
\label{app:extractEnt_other}
In this section, we give a brief description of how we applied the alternative methods to bound the extractable entropy in Section~\ref{sec:DIrandVerif}. Since the aim is to compare this with the performance of the PEF method coupled with our proposed polytope approximations, we took the off-the-shelf results which provide the fairest comparison. 

\subsection{Classical side information: from single round $P_{\rm guess}$ and the Azuma-Hoeffding concentration inequality}

The first method to successfully bound the extractable entropy for sequential processes in devices with memory and classical side information was introduced in \cite{pironio2010random,pironio2013security,fehr2013security}. Here we mainly follow the approach in  \cite{pironio2013security}, which derives a lower bound on the extractable entropy using a min-trade-off function for $H_{\min}(D|Z,E):=-\log(\max_{d,z,e}p(d|z,e))$ and the Azuma-Hoeffding concentration inequality \cite{hoeffding1994probability}. 

The first step is to find a function that lower bounds the min-entropy for every input $z$. Consider the generalisation of optimisation  \eqref{eq:GP_A_givenxbar} described in \cite{nieto2018severalBell},
\begin{align}
	\begin{split}	\label{eq:guessprob_opt}
		P_{\rm guess}(D|Z,\Lambda) = \max_{\{\tilde{\mu}_{\lambda}\}_{\lambda}}  & \sum_{d,z} \tilde{\mu}_{d,z}(d|z)\\
		\text{s.t. } & \sum_{d,z} \tilde{\mu}_{d,z}(C|Z) = p(C|Z)\,, \quad \forall \tilde{\mu}_{d,z} (C|Z)\in \widetilde{\phyM}\,.
	\end{split}
\end{align}

The dual problem can be written as
\begin{align}
	\begin{split}	\label{eq:dual_guessprob_opt}
	 	\min_{B}  & \sum_{c,z} p(c,z)B(c,z)\\
		\text{s.t. } &  \sum_{c,z} \mu(c,z)B(c,z) \geq \max_{d,z}\mu(d|z)\,, \quad \forall \mu \in \qs\,,
	\end{split}
\end{align}
where $\mu(z)=p(z)$. The optimal solution to this problem provides an estimator $B$, optimised for the typical behaviour $p$, that satisfies
\begin{equation}\label{eq:mintradeoffHmin}
	-\log \mathbb{E}_{\mu}(B)\leq -\log \max_{d,z}\mu(d|z), \qquad \forall \mu\in\qs\,,
\end{equation}
and therefore $h(\mu):=-\log \mathbb{E}_{\mu}(B)$ is a min-trade-off function for $H_{\min}(D|Z,E)$, as described in Appendix~\ref{app:PEFmintradeoff}.

Following the method described in \cite{pironio2013security}, we use the single round bound  \eqref{eq:mintradeoffHmin} combined with the Azuma-Hoeffding inequality to obtain the extractable entropy
\begin{equation}\label{eq:HminAZ}
	H_{\min}^{\varepsilon/p_{Acc}}(\mathbf{D}|\mathbf{Z},E,\verb|Accept|)\geq  nt-\gamma\left(2n\ln 1/\kappa\right)^{1/2} -\log 1/(\varepsilon-\kappa) \,,
\end{equation}
where $h(\mu)\geq t$ is the acceptance criterion and $\gamma=\max|B-\mathbb{E}_\mu(B)|\,, \forall \mu\in \qs$. We obtained the optimal estimator $B$ by solving \eqref{eq:dual_guessprob_opt} for the typical behaviours $p(C,Z)$ corresponding to noisy CHSH correlations in Section~\ref{sec:noisyCHSH} and accordingly chose $t=-\log \mathbb{E}_p(B)$.

\subsection{Quantum side information: EAT}

The entropy accumulation theorem (EAT) introduced in \cite{Dupuis:2020EAT} allows us to bound the conditional von Neumann entropy of outputs from a sequential process, and therefore include the general case where an adversary holds quantum side information. 
In order to treat the framework of the original EAT \cite{Dupuis:2020EAT}, we assume here that the entangled systems used in the protocol are shared in advance between the parties, becoming inaccessible to the adversary after this preliminary stage. A generalised EAT has been proposed \cite{genEAT} that allows us to relax this assumption, at the cost of lower rates of certified entropy. For this reason, we have decided to compare our results to the original EAT.

Analogously to the previous section, the initial step consists of finding a min-trade-off function $f$, in this case relative to the von Neumann entropy:
\begin{equation}\label{eq:EATmintradeoff}
	f(\mu)\leq \inf_{\sigma\in\Sigma(\mu)} H(C|Z,E)_\sigma\,,
\end{equation}
where the set $\Sigma(\mu)$ describes all quantum states $\sigma$ of the system compatible with some behaviour $\mu$.

Although one can in principle optimise $f$ for the typical  behaviour $p$  using the Brown-Fawzi-Fawzi (BFF) algorithm \cite{brown2021device}, this approach was too computationally demanding in all the cases analysed in this paper. We instead ran the algorithm to find min-trade-off functions $f_z$ for each fixed input $z$, and from there constructed the min-trade-off function $g(\mu):=\sum_z p(z)f_z(\mu)$, which satisfies
\begin{equation}
	g(\mu) \leq \sum_z p(z) \inf_{\sigma\in\Sigma(\mu)} H(C|z,E)_\sigma \leq \inf_{\sigma\in\Sigma(\mu)} \left(\sum_z p(z)H(C|z,E)_\sigma\right) = \inf_{\sigma\in\Sigma(\mu)} H(C|Z,E)_\sigma\,.
\end{equation} 
By direct application of the EAT, the extractable entropy is
\begin{equation}\label{eq:HminEAT}
	H_{\min}^{\varepsilon/p_{Acc}}(\mathbf{C}|\mathbf{Z},E,\verb|Accept|)\geq  nt-\xi\sqrt{n}\,,
\end{equation}
where $g(\mu)\geq t$ is the chosen acceptance condition and
\begin{equation}
	\xi=2(\log(1+2|C|)+\lceil \| \nabla g \|_\infty \rceil)\sqrt{1-2\log{\varepsilon}}\,,
\end{equation}
where $\| \cdot \|_\infty$ is the infinity (or maximum) norm. In our analyses, we used the threshold $t=g(p)$.

\subsection{Randomness amplification: non-signalling adversaries with classical side information for bipartite Hardy settings}

In ~\cite{ramanathan2018practical}, the authors use the number of random bits at the end of the protocol as their figure of merit, which assumes the use of a specific randomness extractor. Here we are interested in the amount of extractable entropy, which cannot be readily obtained from this reference. We thus go back to the seminal paper \cite{brandaoRealisticAmpFew} and derive an expression for the extractable entropy that can be fairly compared with our results.

We start by using Proposition~3 in Appendix I.E of \cite{ramanathan2018practical} which ensures that any NS probability distribution $p(AB|XY)$, for which a value $h$ of the MDL inequality \eqref{eq:MDL_ineq} is observed, can be bounded by
\begin{equation}
	p(A,B|X,Y)\leq 1-\frac{h}{(\frac{1}{4}-\delta^2)^2}\,.
\end{equation}
The following results allow us to relate the violation $h$ to the experimentally accessible quantities. Following \cite{ramanathan2018practical}, we define the random variable
\begin{equation}\label{eq:RV_MDL}
	M^\delta=			
	\begin{cases}
		-\left(\frac{1}{2}+\delta\right)^2&  \text{if } ABXY=0101,1010,0011\\
		\left(\frac{1}{2} -\delta\right)^2&  \text{if } ABXY=0000\\
		\qquad 0&  \text{otherwise}\,,
	\end{cases}
\end{equation}
whose expected value corresponds to $h=\mathbb{E}(M^\delta)$ whenever inputs $X$ and $Y$ are each described by a $\delta$-SV-source. An estimate for $h$ can then be computed by assuming an IID behaviour of the device, leading to a random variable
\begin{equation}\label{eq:est_MDL}
	L_\delta=\frac{1}{n}\sum_{i=1}^n M^\delta_i\,,
\end{equation} 
whose values correspond to the experimentally observed violation of the MDL inequality \eqref{eq:MDL_ineq}. However, from the point of view of an adversary knowledgeable of the memory effects of the device, the corresponding random variable should be given by
\begin{equation}\label{eq:est_MDL_Eve}
	\overline{L}_\delta=\frac{1}{n}\sum_{i=1}^n \mathbb{E}\left(M^\delta_i|\mathbf{W}_{i-1},e\right)\,,
\end{equation} 
where $\mathbf{W}_{i-1}=A_1B_1X_1Y_1\ldots A_{i-1}B_{i-1}X_{i-1}Y_{i-1}$ is information accumulated over the past rounds. 
Lemma 2 of Appendix I.B in \cite{ramanathan2018practical} uses the Azuma-Hoeffding inequality to obtain the confidence bound
\begin{equation}\label{eq:MDL_confbound}
	\Pr(\overline{L}_\delta\geq L_\delta-s_{Az})\geq 1-2e^{-n\frac{{s_{Az}}^2}{2}}:=1-\epsilon_{Az}\,,
\end{equation}
where $s_{Az}\geq 0$ is a constant.
The following lemma relates multi-round $\overline{L}_\delta$ to the single round quantity $\overline{M}_i^\delta:=\mathbb{E}\left(M^\delta_i|\mathbf{W}_{i-1},e\right)$.
\begin{lemma}[Lemma 3 in Appendix I.B of \cite{ramanathan2018practical}]\label{th:goodrounds}
 	If $\overline{L}_\delta\geq l $ then $\overline{M}_i^\delta\geq k$ in, at least, $\frac{l-k}{1/16-k}n$ rounds, for $k\in [0,l]$.
\end{lemma}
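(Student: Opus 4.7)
The plan is to prove Lemma~\ref{th:goodrounds} by a simple averaging (pigeonhole-style) argument once we have an a priori upper bound on each $\overline{M}_i^\delta$. The key ingredient will be establishing the ceiling $\overline{M}_i^\delta \leq 1/16$ for every round $i$; the rest is a two-line counting estimate on the decomposition of $\overline{L}_\delta$.

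First I would bound $\overline{M}_i^\delta = \mathbb{E}(M_i^\delta \mid \mathbf{W}_{i-1}, e)$ from above. Inspecting \eqref{eq:RV_MDL}, only the outcome $ABXY=0000$ contributes positively, so
\begin{equation*}
\overline{M}_i^\delta \;\leq\; \left(\tfrac{1}{2}-\delta\right)^2 \, \mu(A_iB_iX_iY_i=0000 \mid \mathbf{W}_{i-1}, e) \;\leq\; \left(\tfrac{1}{2}-\delta\right)^2 \, \mu(X_iY_i=00 \mid \mathbf{W}_{i-1}, e).
\end{equation*}
Applying the SV-source constraints \eqref{eq:1SVpolytope} and \eqref{eq:1SVpolytopeB} to $X_i$ and then to $Y_i$ conditioned on $X_i=0$, each input contributes a factor of at most $1/2+\delta$, giving $\mu(X_iY_i=00 \mid \mathbf{W}_{i-1}, e) \leq (1/2+\delta)^2$. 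Combining, $\overline{M}_i^\delta \leq (1/4-\delta^2)^2 \leq 1/16$.

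Next I would let $N_k := |\{i : \overline{M}_i^\delta \geq k\}|$ and split the defining sum of $\overline{L}_\delta$ \eqref{eq:est_MDL_Eve} according to whether a round is above or below threshold:
\begin{equation*}
n\,l \;\leq\; \sum_{i=1}^n \overline{M}_i^\delta \;=\; \sum_{i: \overline{M}_i^\delta \geq k} \overline{M}_i^\delta \;+\; \sum_{i: \overline{M}_i^\delta < k} \overline{M}_i^\delta \;\leq\; N_k \cdot \tfrac{1}{16} \;+\; (n - N_k)\, k,
\end{equation*}
where the first term in the final bound uses the ceiling $1/16$ established above and the second uses the defining inequality $\overline{M}_i^\delta < k$ for the remaining rounds. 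Rearranging gives $N_k(1/16 - k) \geq n(l-k)$, and dividing by $1/16 - k > 0$ (valid since $k \leq l \leq 1/16$) yields the announced bound $N_k \geq \frac{l-k}{1/16-k}\,n$.

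The only nontrivial step is the uniform ceiling on $\overline{M}_i^\delta$: one must be careful that the conditioning on $\mathbf{W}_{i-1}$ and $e$ does not spoil the SV-source inequalities for $X_i$ and $Y_i$. This is fine because the SV condition \eqref{def:SVsource} holds conditionally on arbitrary side information available prior to sampling, and $(\mathbf{W}_{i-1}, e)$ is precisely such side information. Once this is in place, the rest is essentially a Markov-type counting argument and requires no further machinery.
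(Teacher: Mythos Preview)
The paper does not supply its own proof of this lemma; it is quoted verbatim from \cite{ramanathan2018practical} and used as a black box. Your averaging argument is the standard one and is correct.

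One small refinement: your justification that the SV bounds survive conditioning on $\mathbf{W}_{i-1}$ is slightly loose as written. The SV condition \eqref{def:SVsource} only guarantees the bias bound when conditioning on past \emph{SV samples} and $e$, whereas $\mathbf{W}_{i-1}$ also contains past \emph{outputs} $A_1B_1\ldots A_{i-1}B_{i-1}$. The clean way to close this is to invoke the Markov condition \eqref{eq:condindinputs}, which gives $\mu(X_iY_i\mid\mathbf{W}_{i-1},e)=\mu(X_iY_i\mid\mathbf{Z}_{\leq i-1},e)$; the SV bound then applies directly to each factor and your ceiling $\overline{M}_i^\delta\leq(1/2-\delta)^2(1/2+\delta)^2=(1/4-\delta^2)^2\leq 1/16$ follows. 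With that adjustment the pigeonhole estimate is exactly right (modulo the harmless edge case $k=l=1/16$, where the ratio is $0/0$ but the conclusion $N_k=n$ holds trivially).
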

 Let us now define the acceptance threshold of our protocol to be $L_\delta\geq h_{exp}$, where $h_{exp}$ is the observed violation of the MDL inequality \eqref{eq:MDL_ineq}. In this case, $\overline{L}_\delta\geq h_{exp}-s_{Az}$ with probability given by \eqref{eq:MDL_confbound}. Moreover, using Lemma \ref{th:goodrounds}, we obtain 
 \begin{equation}
 	\mu(A_iB_i|X_iY_i,\mathbf{W}_{i-1},E, L_\delta\geq h_{exp})\leq 1-\frac{h_{exp}}{(\frac{1}{4}-\delta^2)^2}=:\gamma
 \end{equation}
in at least a fraction $\alpha(k) :=\frac{h_{exp}-s_{Az}-k}{1/16-k}$ of the rounds, for $k\in[0,h_{exp}-s_{Az}]$.

 Going back to our concise input/output notation $C=AB$ and $Z=XY$ and using Lemmas 12, 18 and Proposition~19 in Appendix B of \cite{brandaoRealisticAmpFew}, we obtain  
\begin{equation}
	\mathbb{E}\left(\max_{\mathbf{c}}\mu(\mathbf{c}|\mathbf{Z},E,\verb|Accept|)\right)\leq \frac{\max(\epsilon_{Az},\gamma^{\alpha n})+\epsilon_{Az}}{p_{Acc}}\leq\frac{\gamma^{\alpha n}+2\epsilon_{Az}}{p_{Acc}}\,.
\end{equation}
From Markov's inequality, we then derive the confidence bound: 
\begin{equation}\label{eq:confboundAmp}
	\textrm{Pr}\left(\mu(\mathbf{C}|\mathbf{Z},E,\verb|Accept|) \geq  \frac{\gamma^{\alpha n}+2\epsilon_{Az}}{\epsilon'+\epsilon_{Az}}\right)\leq \frac{\epsilon'+\epsilon_{Az}}{p_{Acc}}\,, 
\end{equation}
where $\epsilon'\geq 0$ is a free optimisation parameter. Using Theorem~\ref{th:confbound2minent}, we finally obtain the extractable entropy:
\begin{equation}
	H_{\min}^{\varepsilon/p_{Acc}}(\mathbf{C}|\mathbf{Z},E,\verb|Accept|)\geq \max_k\left(-\log\left(\gamma^{\alpha(k)n}+2\epsilon_{Az}\right)\right)+\log \varepsilon
\end{equation}
where  $\varepsilon=\epsilon_{Az}+\epsilon'$ is our total security parameter.

\end{document}